\newcommand{\RR}{\mathbb{R}}
\newcommand{\set}[1]{\left\{#1\right\}}
\newcommand{\bigp}[1]{\left(#1\right)}
\newcommand{\argmax}{\arg\max}
\newcommand{\argmin}{\arg\min}
\newcommand{\rvline}{\hspace*{-\arraycolsep}\vline\hspace*{-\arraycolsep}}
\newcommand{\one}[1]{\mathbf{1}_{#1}}
\newcommand{\tr}{\mathrm{tr}}
\newcommand{\rank}{\mathrm{rank}}
\newtheorem{theorem}{Theorem}
\newtheorem{corollary}[theorem]{Corollary}
\title{K-ARMA Models for Clustering Time Series Data}
\author{%
  Derek O. Hoare \\
  Department of Statistics and Data Science\\
  Cornell University\\
  \texttt{doh24@cornell.edu} \\
  \And
  David S. Matteson \\
  Department of Statistics and Data Science \\
  Cornell University \\
  \texttt{dm484@cornell.edu} \\
  \AND
  Martin T. Wells \\
  Department of Statistics and Data Science \\
  Cornell University \\
  \texttt{mtw1@cornell.edu}\\
}
\begin{document}

\allowdisplaybreaks

\maketitle

\begin{abstract}
    We present an approach to clustering time series data using a model-based generalization of the K-Means algorithm which we call K-Models. We prove the convergence of this general algorithm and relate it to the hard-EM algorithm for mixture modeling. We then apply our method first with an AR($p$) clustering example and show how the clustering algorithm can be made robust to outliers using a least-absolute deviations criteria. We then build our clustering algorithm up for ARMA($p,q$) models and extend this to ARIMA($p,d,q$) models. We develop a goodness of fit statistic for the models fitted to clusters based on the Ljung-Box statistic. We perform experiments with simulated data to show how the algorithm can be used for outlier detection, detecting distributional drift, and discuss the impact of initialization method on empty clusters. We also perform experiments on real data which show that our method is competitive with other existing methods for similar time series clustering tasks.

    \textit{Keywords:} Clustering Algorithms, Time Series Analysis,  ARMA Models
\end{abstract}

\section{Introduction}

Today, an abundance of information is collected over time. This ranges from climate data to server metrics, biometric to sensor data, and digital signals to logistics information. With technological advances, we are also capable of collecting more time series data than ever before. Large quantities of time series data, however, come with the inability for an individual to inspect all of the data. As a consequence, we need methods for identifying patterns in large quantities of time series data that require only minimal manual overhead. One of the ways we can do this is by clustering the time series data and performing analysis at the cluster level. 



Several realizations of time series from the same generative process may look very different from each other yet still have the same underlying process. In these cases, a model based clustering approach can do a much better job of clustering time series that were generated from the same process than traditional point-based clustering methods like the K-Means algorithm. Model-based time series clustering methods have proven to be useful in many real-world situations. They have been used to cluster gene expression data \cite{venkatraman2021empirical}, to understand within-person dynamics \cite{bulteel2016}, to better understand currency exchange dynamics \cite{ROUT20147}, to cluster input signals \cite{elamin2020clustering}, and more. We study the time series clustering problem using new model-based clustering methods focusing on the general class of Autoregressive Moving Average models.


Autoregressive Moving Average (ARMA) processes and their Integrated extension (ARIMA) can be used to describe many observed time series, so we may know or strongly suspect that a collection observed time series come from an ARMA process. In such a case, we should take advantage of this underlying ARMA structure when clustering time series. This has led to several methods for clustering ARMA time series including approaches using features derived from the time series, mixture modeling approaches, methods which cluster the raw time series data \cite{baragona2001simulation}, and models fit to entire clusters of data \cite{bulteel2016,takano2020,hautamaki2008time}. Surveys of time series clustering methods can be found in \cite{decadereview,maharaj2019time,rani2012}. 

The method we explore in this paper uses a model-based generalization of the K-Means algorithm which we call K-Models. We apply this general framework to clustering ARMA time series by fitting ARMA models to entire clusters of data during the clustering process. The approach we use is most similar in flavor to the ARMA mixture modeling approach \cite{armaEM}. However, where they use an expectation maximization (EM) algorithm to fit mixtures of ARMA models to their time series and identify clusters based on membership probabilities, the K-Models approach takes a hard thresholding approach with potential convergence advantages.

\subsection{A Motivating Example}




\begin{figure}[b!]
    \centering
    \makebox[\textwidth][c]{
    \includegraphics[width = 1.2\linewidth]{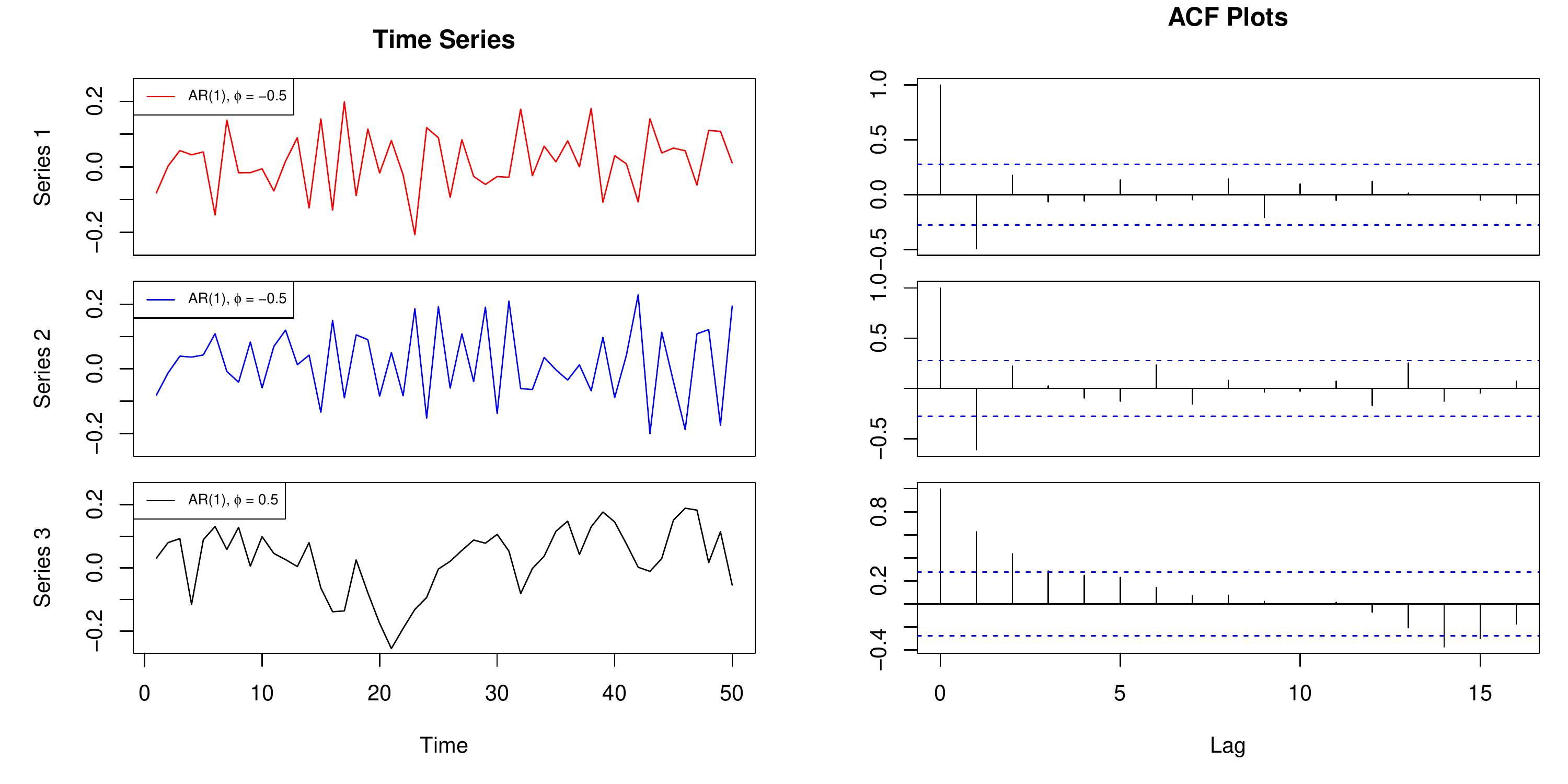}
    }
    \caption{Three AR(1) time series. The Euclidean distances between the time series are d(Series 1, Series 3) = 1.049016, d(Series 2, Series 3) = 1.215401, d(Series 1, Series 2) = 1.462759. Looking at the autocorrelation plots we see that Series 1 and Series 2 appear more similar to each other than to Series 3.}
    \label{fig:motivation}
\end{figure}

Our main reason for studying model-based clustering is that it is able to capture similarities that distance-based clustering methods cannot. As a motivating example, consider the three time series in Figure~\ref{fig:motivation}. Series 1 and Series 2, shown in red and blue were both generated from the exact same AR(1) process with coefficient $\phi_1 = -0.5$. Series 3, shown in black, was generated from an AR(1) process with coefficient $\phi_1 = 0.5$. When we look at the sums of squared differences between the AR(1) time series, however, we would conclude that the Series 1 (red) is more similar to Series 3 (black) than it is to Series 2 (blue), even though Series 1 and Series 2 were generated from the same AR(1) process. This suggests that traditional clustering methods that rely on distance measure between the time series such as the Euclidean distance are unable to capture similarities in the generative processes of time series. On the other hand, if we look at the fitted AR(1) coefficients we have $\hat\phi_1^{red} = -0.4403$, $\hat\phi_1^{blue} = -0.6039$, and $\hat\phi_1^{black} = 0.6449$. Looking at these model coefficients, Series 1 and Series 2 appear much more similar. Clustering these series using the model-based approach that we develop in this paper using 2 clusters also identifies Series 1 and Series 2 as belonging to one cluster and the black as belonging to another.

Being able to identify time series that come from the same process is important in the applications we have discussed. For clustering input signals, the clusters represent that the inputs come from the same process, so the underlying process is more important than the specific realizations of that process. For clustering gene expressions in response to a treatment, it is the trend in how a gene is regulated in response to that treatment that is important, not the magnitude of the response.



\subsection{Organization}
In Section~\ref{sec:kmeans} we describe a general algorithm for model-based clustering which we call K-Models, and prove a condition for the convergence of the K-Models algorithm. In Section~\ref{sec:ar} use this K-Models framework to cluster time series using AR($p$) models using least squares and least absolute deviations approaches.
In Section~\ref{sec:arma} we explore the use of the K-Models framework for clustering ARMA/ARIMA time series. In doing so, we study methods for fitting a single ARMA model to a collection of time series which are used in our clustering algorithm. In Section~\ref{sec:diagnostic} we develop and use an extension of the Ljung-Box test to answer questions about the goodness of fit of ARMA models to our clusters to detect outliers and distributional drift in our time series data. In Section~\ref{sec:vanishing}, we study different initialization methods for our algorithm. In doing so, we find that certain initialization methods lead to many empty clusters in cases where the number of clusters we fit is greater than the true number of underlying clusters. In Section~\ref{sec:realdata}, we test our algorithm on a few open source datasets for comparison to other clustering methods. Finally, in Section~\ref{sec:discussion} we discuss limitations and possible extensions of our clustering approach.

\section{K-Means and K-Models}\label{sec:kmeans}

The K-Means algorithm is one of the most widely used unsupervised clustering algorithms. For a given set of data points $X_1, \ldots, X_n\in\mathbb{R}^d$, the K-Means clustering algorithm seeks to minimize the within-groups sums of squared error over all possible cluster centers $\left\{m_1, \ldots, m_k\right\}\subseteq \mathbb{R}^d$ by using an iterative algorithm of assigning points to clusters and then updating the cluster centers \cite{kmeans, macqueen1967some}. If we view the cluster centers $m_i$ as a constant function representation of a cluster, then we can view K-Means as a model-based clustering algorithm where we assign each point to the constant function that best represents that point.

This observation about the K-Means algorithm has led to several derivative model-based clustering algorithms including Clusterwise Linear Regression where the models used are linear models \cite{spath1979}, K-Random Forests where the models used are Random Forests \cite{bicego2019}, K-Means based kernel clustering algorithms \cite{camastra}, and has been described for some general dissimilarity measures \cite{maranzana1963, bock2008}. We call the model-based generalization of the K-Means algorithm \textit{K-Models} (see Algorithm~\ref{alg:kmodels}). 


\begin{algorithm}[t]
    \caption{The K-Models Clustering Algorithm}\label{alg:kmodels}
    \DontPrintSemicolon
    \KwData{$X_1, \ldots, X_n$, a model class $\mathcal{M}$, a loss function $L$, and a number of clusters $k$.}
    \KwResult{A set of clusters $C_1, \ldots, C_k$ and a set of models $M_1, \ldots, M_k \in \mathcal{M}$}
    \hrulefill\\
    
    Randomly initialize $M_i\in\mathcal{M}$ for $i=1, \ldots, k$ with $M_i\neq M_j$ for $i\neq j$.\;
    \While{not converged}{
    \For{i=1 \emph{\KwTo} k}{
    $C_i \gets \left\{X_l : L(X_l, M_i)\leq L(X_l, M_j) \forall j\neq i\right\}$ \tcp*{Assignment Step}
    } 
    \For{i=1 \emph{\KwTo} k}{$\displaystyle M_i \gets \hat{M}(C_i)$     \tcp*{Update Step}
    }
    }
\end{algorithm}

We point out that the update step in Algorithm~\ref{alg:kmodels}, typically the most computationally intensive step, can be computed in parallel across the clusters of data. Furthermore, we are able to guarantee the convergence of the K-Models algorithm to a local optimum for a large class of models.

\begin{theorem}\label{thm:kmodels} Let $\mathcal{M}$ be a family of models. Let $X_1, \ldots, X_n\in\RR^d$. Let $L:\RR^d\times \mathcal{M}\to \RR_{\geq 0}$ be a loss function.
If $\hat{M}(C_i) = \arg\min_{M\in \mathcal{M}} \sum_{X\in C_i}L(X, M)$ then Algorithm~\ref{alg:kmodels} converges.
\end{theorem}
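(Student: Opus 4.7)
The plan is to exhibit a Lyapunov function and exploit the finiteness of the set of possible partitions. Define the within-cluster total loss
\[ \Phi(C_1,\ldots,C_k,M_1,\ldots,M_k) \;:=\; \sum_{i=1}^k \sum_{X \in C_i} L(X, M_i), \]
and show that both the assignment step and the update step leave $\Phi$ non-increasing, so that $\Phi$ converges (being bounded below by $0$). The actual convergence of the iterates is then pulled out by observing that there are only finitely many possible partitions of the data.

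For the assignment step, with the models $M_i$ held fixed, each $X_l$ is reassigned to a cluster $i$ achieving $\min_i L(X_l, M_i)$, so after the step $\Phi$ equals $\sum_{l=1}^n \min_i L(X_l, M_i)$; this is no larger term by term than the contribution under any prior assignment. For the update step, the hypothesis $\hat M(C_i) = \arg\min_{M\in\mathcal{M}} \sum_{X\in C_i} L(X, M)$ immediately yields $\sum_{X\in C_i} L(X, \hat M(C_i)) \le \sum_{X\in C_i} L(X, M_i)$ for each $i$, and summing over $i$ gives monotonicity of $\Phi$ across the step. These two facts together give a non-increasing bounded-below sequence $\Phi^{(t)}$, hence convergence of the objective.

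To upgrade convergence of $\Phi$ to termination of the algorithm, I would use that there are at most $k^n$ partitions of $X_1,\ldots,X_n$ into $k$ clusters. Fixing a deterministic tie-breaking rule in the assignment step and a deterministic selection whenever $\hat M(C_i)$ is not unique makes each iteration a deterministic function of the current partition, so the sequence of partitions is eventually periodic. Any cycle must have $\Phi$ constant along it, but a cycle of length greater than one would require the assignment step to move some $X_l$ to a model with strictly larger loss, contradicting the assignment rule. Hence the partition stabilises in finitely many steps, and once it does the update step leaves the models unchanged as well. The main delicate point is exactly this handling of tied losses and non-unique argmins: without a deterministic selection rule one only gets convergence of $\Phi$ rather than of the iterates, so I would state the selection convention explicitly before invoking the finite-partition argument.
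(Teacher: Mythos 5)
Your core argument is identical to the paper's: both define the global objective $E(C,M)=\sum_{i=1}^k\sum_{X\in C_i}L(X,M_i)$, show the assignment step and the update step are each non-increasing for it, and invoke boundedness below by $0$ to conclude that the objective converges. Where you genuinely go further is the last paragraph: the paper stops at ``$E$ is monotone and bounded, therefore the algorithm converges,'' which strictly speaking only establishes convergence of the objective \emph{value}, not stabilisation of the clusters or models. Your finite-partition argument (at most $k^n$ partitions, deterministic tie-breaking, no strict-decrease possible along a cycle) is the standard K-Means device for upgrading this to finite termination of the iterates, and it buys a stronger and more honest conclusion than the paper's. One small caution: your stated contradiction --- that a cycle would force the assignment step to move a point to a strictly worse model --- only goes through under the specific convention that ties are broken by keeping a point in its current cluster; with a generic deterministic tie-break (e.g.\ lowest index), a loss-neutral cycle through distinct partitions is not immediately excluded by that sentence alone. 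You flag this delicacy yourself, so the fix is just to commit to the ``stay put on ties'' rule, under which any change of partition forces a strict decrease of $\Phi$ and the finiteness argument closes cleanly.
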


\begin{proof}
This proof follows the proof of convergence for the $K$-Means algorithm. At a given step in the algorithm, let $C = \set{C_i}$ and $M = \set{M_i}$ be our candidate clusters and their corresponding models. Consider the global loss function
\begin{equation*}
    E(C, M) = \sum_{i=1}^k \sum_{X\in C_i} L(X, M_i).
\end{equation*} At the assignment step we assign each point $X_i$ to the cluster in which they have the smallest loss to get a set of clusters $C' = \set{C_i'}$ which by construction satisfy the relationship
\begin{align*}
    E(C', M) = \sum_{i=1}^k \sum_{X\in C_i'} L(X, M_i)
    \leq \sum_{i=1}^k \sum_{X\in C_i} L(X, M_i) = E(C, M).
\end{align*}
For the update step, we construct $M' = \set{M_i'}$ so that $M_i'$ satisfies $\sum_{X\in C_i'}  L(X, M_i')\leq \sum_{X\in C_i'}  L(X, M_i).$ Thus 
\begin{align*}
    E(C', M') = \sum_{i=1}^k \sum_{X\in C_i'} L(X, M_i')
    \leq \sum_{i=1}^k \sum_{X\in C_i'} L(X, M_i) = E(C', M).
\end{align*}
So we have that the total loss $E$ is monotonically decreasing with each step of the iteration in the K-Models algorithm. Since $E$ is bounded below by $0$, this implies the convergence of $E$ to a local minimum. Therefore Algorithm~\ref{alg:kmodels} converges. \end{proof}

Theorem~\ref{thm:kmodels} says is that if we use the same loss function for both fitting the models and performing the cluster assignment, then Algorithm~\ref{alg:kmodels} converges to a local optimum solution. We can see that Theorem~\ref{thm:kmodels} implies the convergence of the K-Means algorithm since both the assignment and the update steps use the euclidean distance as the loss $L$. Additionally, we will use Theorem~\ref{thm:kmodels} to imply the convergence of our clustering algorithms by using a least squares criteria in the update step when we fit our time series models using least squares methods, and by using a least absolute deviation criteria in the update step when we fit models using median regression techniques.

\subsection{Relationship to Mixture Modeling and Expectation Maximization Algorithms}

The K-Models framework presents a hard clustering approach to model-based clustering which can accommodate many model classes. One of the competitors to this framework is the mixture modeling approach which has a nice probabilistic framework and a general fitting procedure, namely the EM algorithm \cite{dempster1977maximum}. This mixture modeling framework appears to be more commonly used even though it has many of the same flaws as the K-Models framework including convergence to a local optimum solution and convergence rates dependent on the initialization choice. 

As a part of the EM algorithm for mixture modeling, during the E-step class membership probabilities are computed. These are usually of the form 
\begin{equation*}
    P(X\in C_i) = \frac{f_{\theta_i}(X)}{\sum_{j=1}^kf_{\theta_j}(X)},
\end{equation*} where $f_{\theta_i}$ is the density of the $i$th mixture component with parameters $\theta_i$.  By taking the densities to a power $\alpha>0$, we get 
\begin{equation*}
P_\alpha(X\in C_i) = \frac{f_{\theta_i}(X)^\alpha}{\sum_{j=1}^kf_{\theta_j}(X)^\alpha}.
\end{equation*}
This $P_\alpha$ is often used for annealing the EM-algorithm by slowly increasing $\alpha$ from $0\to 1$ to help it get away from local solutions and reach a globally optimum solution \cite{ueda1998deterministic}.
Assuming no ties, taking a limit as $\alpha\to\infty$ yields 
\begin{equation*} 
P_\infty(X\in C_i) = \begin{cases}
1 & \text{if } i = \arg\max_j f_{\theta_j}(X);\\
0 & \text{otherwise.}
\end{cases}
\end{equation*}
By adding the parameter $\alpha$ and taking the limit as $\alpha\to\infty$, we have essentially converted the EM algorithm into a hard clustering algorithm since during the E-step we would be assigning elements to the class in which they have the highest likelihood, and during the M-step we would compute the parameters based on those class assignments. Using $P_\infty$ for the cluster membership probabilities is sometimes called hard-EM. The E-step is analogous to the assignment step in Algorithm~\ref{alg:kmodels} and the M-step is analogous to the update step.

We point this out because increasing $\alpha$ to be greater than $1$ has been used to hasten the convergence rate of the EM algorithm \cite{naim2012convergence}, and K-Means algorithms have in several cases been shown to have better convergence rates than EM algorithms \cite{bottou1994convergence, su2007search}. This also helps explain why using K-Means can be useful for initializing EM runs for Gaussian mixture models, and how we can extend this idea to using K-Models to initialize EM Algorithms for more complicated mixture modeling problems. 

The relationship between K-Models and EM algorithms for mixture modeling can be summarized with the following theorem.

\begin{theorem}\label{thm:hardem}
Hard-EM for mixture modeling is a special case of K-Models if the likelihood function is bounded above.
\end{theorem}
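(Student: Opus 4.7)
The plan is to exhibit hard-EM as a literal instance of Algorithm~\ref{alg:kmodels} by choosing the model class, the loss, and the model-fit operator $\hat M$ in the obvious way, and then checking that the assignment and update steps match. The only substantive hypothesis needed will be the upper bound on the likelihood, which is there purely to make the candidate loss non-negative so that it lies in the range $\RR_{\geq 0}$ required by the K-Models setup (and by Theorem~\ref{thm:kmodels}).

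Concretely, I would take the model class $\mathcal{M}$ to be the parametric family $\set{f_\theta : \theta \in \Theta}$ used by the mixture model, so that a ``model'' $M_i$ is the parameter vector $\theta_i$. By hypothesis there is a finite constant $B = \sup_{X,\theta} f_\theta(X) < \infty$, and I would define the loss
\begin{equation*}
    L(X, \theta) \;=\; \log B - \log f_\theta(X),
\end{equation*}
which is non-negative and so maps into $\RR_{\geq 0}$ as required. Finally I would define the update operator by $\hat M(C_i) = \argmax_{\theta \in \Theta} \sum_{X \in C_i} \log f_\theta(X)$, i.e. the maximum likelihood estimator on $C_i$; by construction this minimizes $\sum_{X \in C_i} L(X, \theta)$, so the hypothesis of Theorem~\ref{thm:kmodels} is satisfied.

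With these choices the two sides match step-for-step. In the K-Models assignment step, a point $X$ is assigned to cluster $i$ iff $L(X, \theta_i) \leq L(X, \theta_j)$ for all $j$, which, after cancelling $\log B$ and negating, is the same as $f_{\theta_i}(X) \geq f_{\theta_j}(X)$ for all $j$; this is exactly the hard-EM rule given by $P_\infty$ in the excerpt. In the K-Models update step, the chosen $\theta_i$ minimizes $\sum_{X \in C_i} L(X, \theta) = |C_i|\log B - \sum_{X\in C_i}\log f_\theta(X)$, which is equivalent to maximizing the joint log-likelihood on $C_i$; that is precisely the hard-EM M-step. So hard-EM is an instance of Algorithm~\ref{alg:kmodels} with this $(\mathcal{M}, L, \hat M)$, completing the proof.

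The main obstacle is simply the non-negativity requirement on $L$: without the upper bound $B < \infty$, the shift $L(X,\theta) = \log B - \log f_\theta(X)$ cannot be made uniformly non-negative, and the natural negative log-likelihood loss would fall outside the domain assumed for K-Models. One minor technical point I would flag is the possibility that $f_\theta(X) = 0$ (giving $L = +\infty$) or that the MLE on $C_i$ fails to exist; these are handled by the usual conventions (restricting to $\theta$ with $f_\theta(X) > 0$ on the data, or assuming existence of the MLE), and do not affect the structural correspondence between the two algorithms.
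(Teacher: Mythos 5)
Your proposal is correct and follows essentially the same route as the paper's proof: both define the loss as a constant shift of the negative log-likelihood (your $\log B$ playing the role of the paper's bound $C$), use the boundedness hypothesis solely to guarantee non-negativity of the loss, and verify the step-for-step correspondence of the assignment step with the hard E-step and the update step with the hard M-step. Your explicit flagging of the $f_\theta(X)=0$ and MLE-existence edge cases is a minor refinement the paper leaves implicit.
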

\begin{proof} Consider data $X = \set{X_1, \ldots, X_n}$ and $k$ mixture components of the densities $f_\theta$.
If the likelihood function is bounded above, then $loglik(\theta, X)<C$ for all $\theta$, so we can consider $M(\theta, X) = C - loglik(\theta, X) >0$. Finding the value of $\theta$ that minimizes $M(\theta, X)$ is equivalent to finding the value of $\theta$ that maximizes $lik(\theta, X)$, and $M(\theta, X)$ satisfies the properties of a loss function for Algorithm~\ref{alg:kmodels}. 

\textit{Equivalence of hard E-step and K-Models assignment Step}: In the hard-EM algorithm, we have $P_\infty(X\in C_j)=1$ if and only if $f_{\theta_\ell}(X)<f_{\theta_j}(X)$ for all $\ell\neq j$. This happens if and only if $C-\log f_{\theta_\ell}(X)>C-\log f_{\theta_j}(X)$ for all $\ell\neq j$. The likelihood $lik(\theta, X) = f_{\theta}(X),$ so we have $M(\theta_\ell, X)> M(\theta_j, X)$ for all $\ell\neq j$ and thus we have that $P_\infty(X\in C_j) = 1$ is equivalent to the assignment rule $X\in C_j$ if and only if $M(\theta_j, X)<M(\theta_\ell, X)$ which is the assignment rule we have in the K-Models algorithm.

\textit{Equivalence of hard M-step and K-Models update Step}: The M-step in the hard-EM algorithm takes the form $\hat\theta_j = \argmax_\theta \prod_{X\in C_j} f_{\theta}(X)$. Computing, we have
\begin{align*}
    \hat\theta_j &= \argmax_\theta \prod_{X\in C_j} f_{\theta}(X)\\
    &= \arg\max_\theta \sum_{X\in C_j}\log(f_{\theta_j}(X))\\
    &= \argmin_{\theta} |C_j|\cdot C - \sum_{X\in C_j}\log f_{\theta}(X)\\
    &=\argmin_\theta \sum_{X\in C_j} (C-\log f_\theta(X))\\
    &= \argmin_\theta \sum_{X\in C_j} M(\theta, X)\\
    &= \argmin_\theta M(\theta, C_j).
\end{align*}

So the set of parameters we get for the model fit in the hard-EM algorithm is the same as the set of parameters we get using the K-Models approach using the cluster-wide loss $M(\theta, C_j) = \sum_{X\in C_j} (C - \log f_\theta(X))$.

So, we have that the equivalence between the E-step and the K-Models assignment step, and the equivalence of the M-step and the K-Models update step which together give use the equivalence of hard-EM and K-Models for the given choice of loss function.
\end{proof}

Several mixture modeling approaches have been developed for time series clustering and classification tasks for which, to our knowledge, hard-clustering analogues have not been developed. Some examples include the work of Xiong and Yeung who develop an EM algorithm for clustering ARMA models \cite{armaEM}; the work of Ren and Barnett who develop an EM algorithm for a Wishart mixture model for clustering time series based on their autocovariance matrices \cite{ren2022autoregressive}; and the work of Coke and Tsao who develop an EM algorithm for a random effects model for clustering electricity consumers based on their consumption \cite{coke2010random}. Theorem~\ref{thm:hardem} says that each of these mixture models has a hard-clustering K-Models analogue. The corresponding K-Models algorithm can then potentially be used for initialization or as its own clustering algorithm. We show in the coming sections how this can be done for clustering ARMA models.



\section{K-AR(\textit{p}) Models}\label{sec:ar}
We start our time series clustering approach by discussing AR($p$) models, a special case of the more general ARMA($p,q$) model. Unlike general ARMA models and MA models, AR($p$) models do not have any unobserved covariates. This allows us to take a more direct approach to their estimation and more easily consider robust regression techniques for their estimation.

Recall that a time series $X_{t}$ is AR($p$) if it takes the form
\begin{equation*}
    X_t = a_t + \sum_{i=1}^p \phi_i X_{t-i},
\end{equation*}
where the $a_t$ form a white noise process. If we have several time series $X_{1,t},\ldots, X_{n,t}$ that follow the same AR($p$) process, then we can fit the parameters $\phi$ using the method of conditional least squares: 
\begin{equation*}
    \hat{\phi}_{LS} = \arg\min_{\phi}\sum_{j=1}^n\sum_{t=p+1}^T\left(X_{j,t} - \sum_{i=1}^p \phi_iX_{j,t-i}\right)^2,
\end{equation*}
and in fact, this approach is used by \cite{bulteel2016} for clustering VAR(1) models and by \cite{takano2020} for clustering VAR($p$) models. 

We extend this method and make it more robust to outlier time series by using least absolute deviations (LAD) instead of least squares regression: 
\begin{equation}
    \hat{\phi}_{LAD}= \arg\min_{\phi}\sum_{j=1}^n\sum_{t=p+1}^T\left\lvert X_{j,t} - \sum_{i=1}^p \phi_iX_{j,t-i}\right\rvert. \label{eq:lad}
\end{equation}

This can be written as a regression problem in cannonical form as 
\begin{equation*}
    \hat{\phi}_{LAD} = \arg\min_{\phi\in\RR^p} ||\mathbf{Y} - \mathbf{X}\phi||_1,
\end{equation*}
where $\mathbf{Y}' = (\mathbf{Y}_1,
\mathbf{Y}_2, \ldots, \mathbf{Y}_n)$ and $\mathbf{Y}_i' = (X_{i, p+1},\ldots, X_{i,T}),$
and 
\begin{equation*}
\quad \mathbf{X} = \begin{bmatrix}
\mathbf{X}_1\\
\mathbf{X}_2\\
\vdots \\
\mathbf{X}_n
\end{bmatrix}, \text{ where }
\mathbf{X}_i = \begin{bmatrix}
X_{i, p} & X_{i, p-1} & \cdots & X_{i, 1}\\
X_{i, p+1} & X_{i, p} & \cdots & X_{i, 2}\\
\vdots & \vdots & \ddots & \vdots\\
X_{i, T} & X_{i, T-1} & \cdots & X_{i, T-p}
\end{bmatrix}.
\end{equation*}

This reformulation of the problem allows us to use existing packages to solve the $L_1$ regression problem such as \texttt{L1pack} in \texttt{R}. If we use the sum of absolute deviations as our criteria in the assignment step and use Equation~(\ref{eq:lad}) in the update step of Algorithm~\ref{alg:kmodels}, then Theorem~\ref{thm:kmodels} tells us that Algorithm~\ref{alg:kmodels} will converge to a locally optimal solution. 

For a simulated example using 10 clusters of AR(2) time series and the LAD criteria, we were able to correctly recover all 10 clusters when we had 1000 time points for each time series. Lowering the number of time points to 100 for each time series reduces the recovery rate slightly. See Appendix~\ref{sec:restarts} for details.

\section{K-ARMA(\textit{p,q}) Models}\label{sec:arma}

Autoregressive Moving Average (ARMA) models are a well studied time series model. A time series $X_t$ is said to be ARMA($p,q$) if it has the following form:
\begin{equation*}
X_{t} = a_t + \sum_{i=1}^p \phi_i X_{t-i} + \sum_{j=1}^q \theta_j a_{t-j},
\end{equation*}
where the $a_t$ form a white noise process with constant variance $\sigma_a$.

In order to use the K-Models paradigm, however, we need to be able to fit a single ARMA($p,q$) model to a corpus of time series data. One standard method for fitting ARMA($p,q$) is the method of maximum likelihood under the additional assumption that $a_t\sim N(0, \sigma_a)$. We can extend the use of maximum likelihood for fitting a single ARMA model to a cluster of time series data $(X_{j,t})$ which amounts to minimizing the sums of squares criterion 
\begin{equation}
    (\hat\phi, \hat\theta) = \arg\min_{\phi,\theta}\sum_{j=1}^n\sum_{t=1}^T (X_{j,t} - \sum_{i=1}^p \phi_i X_{j,t-i} - \sum_{k=1}^q \theta_k a_{j,t-k})^2,\label{eq:cssq}
\end{equation} 
where $a_{j,t}$ is the white noise process associated with the time series $X_{j,t}$ (see Appendix~\ref{sec:mle} for a derivation of Equation~\ref{eq:cssq}).


We can turn this into a conditional sums of squares problem by conditioning on $X_{j,1}, \ldots, X_{j,p}$ for $j = 1, \ldots, n$, and assuming that the first $q$ innovations are $0$. That is $a_{j,1} = \cdots = a_{j,q} = 0$ for $j = 1,\ldots, n$. The conditional sums of squared criterion can be minimized using a generalized least-squares procedure, which we modify from the standard procedure\footnote{See Section 7.2.4 in \cite{box2016time} for a nice overview of the algorithm for a single time series.} in \cite{box2016time} for fitting ARMA($p,q$) models to get Algorithm~\ref{alg:karma}. Algorithm~\ref{alg:karma} allows us to fit ARMA($p,q$) models to a cluster of time series in the update step of Algorithm~\ref{alg:kmodels}, and Theorem~\ref{thm:kmodels} tells us that we should use the conditional sums of squares criteria for the assignment step to guarantee convergence of the K-Models algorithm for clustering ARMA($p,q$) time series.

\begin{algorithm}[t]
    \caption{Algorithm for fitting an ARMA($p,q$) model to a cluster of time series.}\label{alg:karma}
    \DontPrintSemicolon
    \KwData{Time series $X_{1,t}, \ldots, X_{n,t}$ of length $T$, AR order $p$ and MA order $q$}
    \KwResult{Parameters $\phi = (\phi_1, \ldots, \phi_p)$ and $\psi = (\psi_1, \ldots, \psi_q) = -\theta = (-\theta_1, \ldots, -\theta_q)$}
    \hrulefill\\
    \tcp{Initialization}
    Choose starting values $\phi^{(0)}$ and $\psi^{(0)}$\;
    Set $u_{j,t} = v_{j,t} = \epsilon_{j,t}=0$ for $t<\max(p,q)+1$\;
    $k \gets 0$\;
    \While{not converged}{
        $\epsilon_{j,t}^{(k)} \gets X_{j,t} - \sum_{i=1}^p\phi_{i}^{(k)}X_{j,t-i}  + \sum_{i=1}^q\psi_i \epsilon_{j,t-i}^{(k)}$\;
        $u_{j,t} \gets \sum_{i=1}^q \psi_i^{(k)}u_{j,t-i} + X_{j,t}$\;
        $v_{j,t} \gets \sum_{i=1}^q \psi_i^{(k)}v_{j,t-i}- \epsilon_{j,t}^{(k)}$\;
        ${\displaystyle(\Delta\phi, \Delta\psi) \gets \arg\min_{\Delta\phi, \Delta\psi}}\sum_{j=1}^n\sum_{t=\max{p,q}}^T\bigp{\epsilon_{j,t}^{(k)} - \sum_{i=1}^p \Delta\phi_i u_{j,t-i} - \sum_{i=1}^q \Delta\psi_i v_{j,t-i}}^2$ \;
        $\phi^{(k+1)} \gets \phi^{(k)}+\Delta\phi$\;
        $\psi^{(k+1)} \gets \psi^{(k)}+\Delta\psi$\;
        $k\gets k+1$\;
    }
\end{algorithm}

While our primary purpose in deriving Algorithm~\ref{alg:karma} is for its use in clustering, we point out that it can be used any time we want to fit a single ARMA(\textit{p,q}) model to a corpus of time series. In particular, this has been noted to be useful when studying sparse time series \cite{jha2015clustering,barbaglia2016commodity}.

We present Algorithm~\ref{alg:karma} for fitting an ARMA(\textit{p,q}) model where all of the time series are assumed to be the same length. This can be easily extended to time series of different lengths by computing all of the values of $\epsilon_{j,t}^{(k)}$, $u_{j,t}$ and $v_{j,t}$ up to time $T_i$, where $T_i$ is the length of time series $X_i$ and then computing the updates as
\begin{equation}
    {\displaystyle(\Delta\phi, \Delta\psi) \gets \arg\min_{\Delta\phi, \Delta\psi}}\sum_{j=1}^n\sum_{t=\max{p,q}}^{T_i}\bigp{\epsilon_{j,t}^{(k)} - \sum_{i=1}^p \Delta\phi_i u_{j,t-i} - \sum_{i=1}^q \Delta\psi_i v_{j,t-i}}^2.
\end{equation}
It is also possible to weight the time series during this computation by $1/T_i$ so that the longer time series don't have an outsize influence on the ARMA parameters for a cluster, or by another value of interest such as an estimate of the variance for time series $\sigma_{\epsilon_i}$.


The extension of the clustering algorithm from ARMA($p,q$) models to ARIMA($p,d,q$) models is straightforward when $d$ is known. During the model fitting process we start by taking $d$ differences of all of our time series $X_{1,t}, \ldots, X_{n,t}$ to get time series $Y_{1,t}, \ldots, Y_{n,t}$. We then run the K-ARMA($p,q$) algorithm on the resulting differenced time series $(Y_{i,t})$ to cluster time series. 

\subsection{An ARMA(1,1) Example}

As an example we consider 50 time series of length 200 generated from two ARMA(1,1) processes. The first 25 time series come from an ARMA(1,1) process with $\phi_1 = -0.4$ and $\theta = -0.2$. The rest of the time series come from an ARMA(1,1) process with $\phi_1 = 0.4$ and $\theta = 0.4$. Running the K-ARMAs algorithm on this dataset with $k=2$ and correctly specified model parameters $p=q=1$ we are able to completely recover the two clusters of models used in our data generation process (see Figure~\ref{fig:arma}).

\begin{figure}
    \centering
    \includegraphics[width=0.6\textwidth]{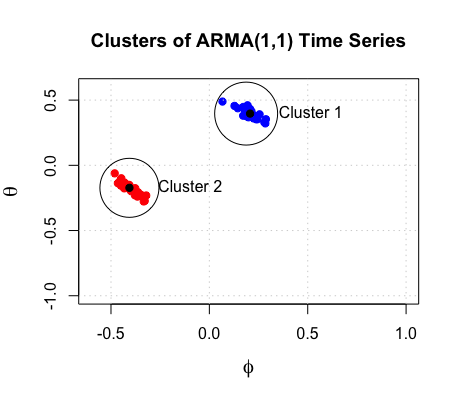}
    \caption{Each point represents the fitted ARMA(1,1) parameters colored by the clusters the K-Models algorithm identified.}
    \label{fig:arma}
\end{figure}

\section{Cluster Diagnostics}\label{sec:diagnostic}

Since the K-ARMA($p,q$) algorithm is unsupervised, it is useful to ask questions about the resulting clusters. The two questions that we focus on are:
\begin{enumerate}
    \item Is the model $M_i$ fitted to cluster $C_i$ appropriate for a time series $X\in C_i$?
    \item Is the model $M_i$ fitted to cluster $C_i$ appropriate for the entire cluster $C_i$?
\end{enumerate}

These questions are particularly important to answer if we want to use the models beyond just clustering the data such as for making forecasts for the time series in a cluster. We answer both of these questions using the Ljung-Box test \cite{ljungbox}. For the first question, we can apply the Ljung-Box test directly, where the residuals are calculated using the model $M_i$. For the second question, however, we need to extend the Ljung-Box test to work with models fit to clusters of data. We do this by establishing asymptotic distributions for statistics derived from the Ljung-Box statistic. The proofs of the results presented in this section extend the original proof of Box and Pierce \cite{boxpierce} and are detailed in Appendix~\ref{sec:proofs}.

Let $\hat{M}$ be an ARMA model fitted to a set of time series $C = \set{X_{1,t}, \ldots, X_{n,t}}$ each of length $T$. Denote the residuals of the time series $X_{i,t}$ under $\hat{M}$ as $\hat{a}_{i,t}$. We define the \textit{Grouped-Ljung-Box} statistic as:

\begin{equation}
    Q_{group}(\hat{r}) = T(T+2)\sum_{i=1}^n \sum_{l=1}^m (T-l)^{-1} \hat{r}_{i,l}^2, \label{eq:glb}
\end{equation}
where $m$ is the number of lags to consider, and $\hat{r}_{i,l}$ is the estimated $l$-lag auto-correlation of the residuals for time series $X_{i,t}$,
\begin{equation}
    \hat{r}_{i,l} = \bigp{\sum_{t=l+1}^T \hat{a}_{i,t} \hat{a}_{i,t-l}}/\bigp{\sum_{t=1}^T \hat{a}_{i,t}^2}.
\end{equation}

Note that (\ref{eq:glb}) is the sum of Ljung-Box statistics for the residuals of each individual time series in $C$, and in the case that a cluster $C$ has only one time series, $Q_{group}(\hat{r})$ is exactly equal to the Ljung-Box statistic. Using the prior of work of Box and Pierce, we can derive the asymptotic distribution of this statistic.

\begin{theorem}\label{thm:glb}
For $n$ independent time series $X_{1,t}, \ldots, X_{n,t}$ each of length $T$ with the fitted ARMA($p,q$) model $\hat{M}$, the Grouped-Ljung-Box statistic $Q_{group}(\hat{r})$ computed with $m$ lags asymptotically follows a $\chi^2_{(nm-p-q)}$ distribution if the model is correctly identified.
\end{theorem}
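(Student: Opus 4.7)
The plan is to extend the classical Box-Pierce asymptotic argument to the pooled-estimation setting in which a single ARMA($p,q$) model is fit jointly to all $n$ independent series. First I would introduce the ``oracle'' residual autocorrelations $r_{i,l}^*$ obtained from the \emph{true} innovations $a_{i,t}$ (that is, evaluated at the true parameter $\beta^* = (\phi^*, \theta^*)\in\RR^{p+q}$), and stack them, alongside the observed $\hat r_{i,l}$, into vectors $\mathbf{r}^*, \hat{\mathbf{r}}\in\RR^{nm}$ under a common ordering. Because the $a_{i,t}$ are white noise and the $n$ series are independent, a standard central limit argument gives $\sqrt{T}\,\mathbf{r}^* \xrightarrow{d} N(0, I_{nm})$.

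Next, I would Taylor expand each $\hat r_{i,l}$ around $\beta^*$ to get
\begin{equation*}
\hat r_{i,l} = r_{i,l}^* - c_l^\top(\hat\beta - \beta^*) + o_p(T^{-1/2}),
\end{equation*}
where the sensitivity vector $c_l\in\RR^{p+q}$ depends only on the shared ARMA parameters (through the $\pi$-weights of the model) and, crucially, is \emph{independent of the series index} $i$. Collecting the rows $c_l^\top$ into a matrix $C\in\RR^{m\times(p+q)}$, the stacked form is $\hat{\mathbf{r}} = \mathbf{r}^* - (\mathbf{1}_n\otimes C)(\hat\beta - \beta^*) + o_p(T^{-1/2})$.

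Third, I would linearize the pooled CLS estimator. A standard asymptotic argument gives
\begin{equation*}
\hat\beta - \beta^* = \tfrac{1}{n}(C^\top C)^{-1}C^\top \sum_{i=1}^n r_i^* + o_p(T^{-1/2}),
\end{equation*}
where the $1/n$ factor reflects that pooling $n$ series scales the Fisher information by $n$. Substituting back and writing $P = C(C^\top C)^{-1}C^\top$ (the rank-$(p+q)$ orthogonal projection on $\RR^m$ used in the single-series Box-Pierce proof) and $J_n = \tfrac{1}{n}\mathbf{1}_n\mathbf{1}_n^\top$ (the rank-$1$ averaging projection on $\RR^n$), the Kronecker identity $(\mathbf{1}_n\otimes C)(C^\top C)^{-1}C^\top (\mathbf{1}_n^\top \otimes I_m) = \mathbf{1}_n\mathbf{1}_n^\top \otimes P$ yields the compact representation
\begin{equation*}
\hat{\mathbf{r}} = (I_{nm} - J_n\otimes P)\,\mathbf{r}^* + o_p(T^{-1/2}).
\end{equation*}
Since $J_n\otimes P$ is itself an orthogonal projection of rank $1\cdot(p+q)$, the matrix $I_{nm} - J_n\otimes P$ is an orthogonal projection of rank $nm - p - q$, and combined with $\sqrt{T}\mathbf{r}^*\xrightarrow{d} N(0,I_{nm})$ this gives $T\,\hat{\mathbf{r}}^\top\hat{\mathbf{r}} \xrightarrow{d} \chi^2_{nm-p-q}$. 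To finish, since $T(T+2)/(T-l) = T + O(1)$ for each fixed lag $l$, the statistic $Q_{group}(\hat r)$ differs from $T\,\hat{\mathbf{r}}^\top\hat{\mathbf{r}}$ by $o_p(1)$ and inherits the same limit.

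The main obstacle will be the linearization in the third step: verifying that the pooled score equations genuinely produce the $1/n$ averaging factor and that the coupling matrix collapses exactly to the tensor product $J_n\otimes P$. This Kronecker structure is the key reason the degrees of freedom come out to $nm - (p+q)$ rather than the naive $n(m-p-q)$ one might expect from treating the $n$ individual Ljung-Box statistics as asymptotically independent.
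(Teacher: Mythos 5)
Your proposal is correct and follows essentially the same route as the paper's proof: the paper stacks the residual autocorrelations into $\hat r_{long}$, shows $\hat r_{long} \overset{d}{=} (I_{nm}-C)r_{long}+O_p(1/T)$ with $C = \frac{1}{n}(\mathbf{1}_n\mathbf{1}_n^{T})\otimes X(X'X)^{-1}X'$ — exactly your $J_n\otimes P$ — and reads off the degrees of freedom from $\rank(I_{nm}-C)=nm-p-q$. The only cosmetic difference is that the paper obtains this projection identity via the Box--Pierce orthogonality $C\hat r_{long}\overset{d}{=}0+O_p(1/T)$ rather than by substituting the explicit linearization of the pooled estimator, but these are two presentations of the same score-equation argument.
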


The proof of Theorem~\ref{thm:glb}, while quite similar to the derivation of Box and Pierce \cite{boxpierce}, is nontrivial because the $l$-lag autocorrelations in the residuals of different time series $\hat{r}_{i,l}$ and $\hat{r}_{j,l}$ are not independent. See Appendix~\ref{sec:proofs} for details.


Figure~\ref{fig:Qgroup} shows Monte-Carlo simulations of the $Q_{group}(\hat{r})$ statistic plotted with its corresponding asymptotic $\chi^2$ distribution. Visually, we see that as $T$ increases from $200$ to $2000$ the asymptotic approximation gets better, as we might expect for a limiting result. Additionally, a hypothesis test performed using the asymptotic result appears to be a conservative test when the value of $T$ is smaller.

\begin{figure}[b]
    \centering
    \makebox[\textwidth][c]{
    \includegraphics[width=0.5\textwidth]{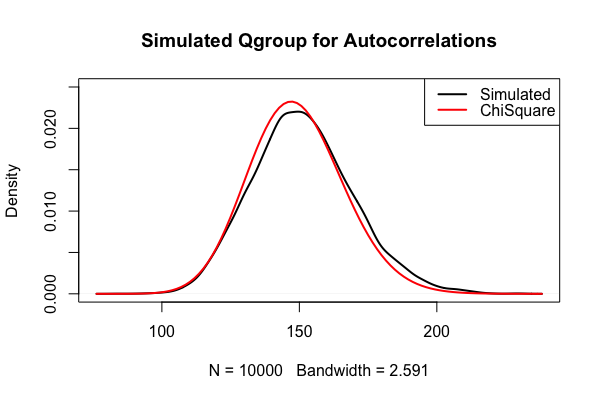}
    \includegraphics[width=0.5\textwidth]{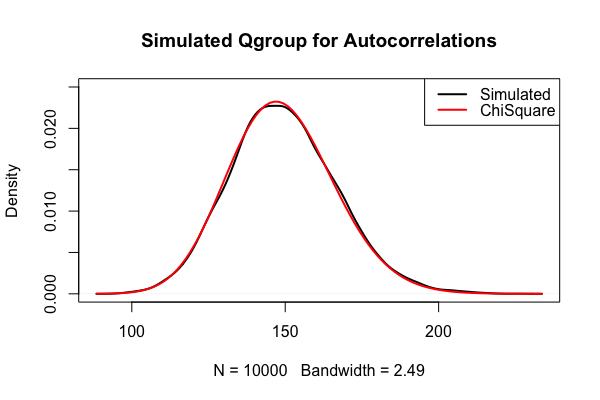}
    }
    \caption{Distribution of 10000 Monte-Carlo estimates of $Q_{group}(\hat{r})$ using $m=15$ lags for 10 AR(1) time series of length $T=200$ (left) and for 10 AR(1) time series of length $T=2000$ (right).}
    \label{fig:Qgroup}
\end{figure}

The asymptotic result presented in Theorem~\ref{thm:glb} suggests a portmanteau test for the goodness of fit of model $M_i$ to cluster $C_i$. That is, we can test the hypotheses
\begin{align*}
    H_0:\quad &\text{The are no autocorrelations in the residuals},\\
    H_1:\quad&\text{There are autocorrelations in the residuals}.
\end{align*}
by rejecting $H_0$ at the $\alpha$-level of significance if and only if $Q_{group} > \chi^2_{(nm-p-q)}(1-\alpha)$. Autocorrelations in the residuals would indicate that the model chosen for a cluster is not adequately explaining the behavior of the time series in that cluster. This may indicate model misspecification for a cluster, which in turn may indicate the grouping of multiple distinct clusters into a single cluster during the K-ARMA clustering process.




We can also generalize Theorem~\ref{thm:glb} to a goodness of fit test about all of the fitted models and their parameters. Let $C = \set{C_1, C_2, \ldots, C_k}$ be a set of clusters of sizes $n_1, n_2, \ldots, n_k$ respectively and fitted ARMA($p_i, q_i)$ models $\hat{M}_i$. Denote the time series in cluster $C_i = \set{X_{i,1,t}, X_{i,2,t}, \ldots X_{i,n_i, t}}$. Let $\hat{r}_{i,j,l}$ be the $l$-lag autocorrelation of the residuals for time series $X_{i,j,t}$ under its appropriate model $\hat{M}_i$. Then we can define the statistic $$Q_{total}(\hat{r}) = T(T+2)\sum_{i=1}^k\sum_{j=1}^{n_i}\sum_{l = 1}^m (T-l)^{-1}\hat{r}_{i,j,l}^2$$

\begin{corollary}\label{thm:Qtotal}
The statistic $Q_{total}(\hat{r})$ asymptotically follows a $\chi^2_{\nu}$ distribution where the degrees of freedom $\nu = \sum_{i=1}^k (n_im-p_i-q_i)$ if the models $\hat{M}_i$ are correctly identified.
\end{corollary}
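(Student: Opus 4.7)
The plan is to observe that $Q_{total}$ decomposes cleanly as a sum of Grouped-Ljung-Box statistics, one per cluster, and then to invoke Theorem~\ref{thm:glb} on each summand together with an independence argument across clusters. Concretely, write
\begin{equation*}
Q_{total}(\hat{r}) = \sum_{i=1}^k Q_{group,i}(\hat{r}), \qquad Q_{group,i}(\hat{r}) = T(T+2)\sum_{j=1}^{n_i}\sum_{l=1}^m (T-l)^{-1}\hat{r}_{i,j,l}^2,
\end{equation*}
so the corollary reduces to two facts: (a) each $Q_{group,i}$ is asymptotically $\chi^2_{n_im - p_i - q_i}$, and (b) the $Q_{group,i}$ are asymptotically independent across $i$.

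For (a), I would apply Theorem~\ref{thm:glb} directly to cluster $C_i$ with its fitted ARMA($p_i,q_i$) model $\hat{M}_i$. The hypothesis that the models are correctly identified is exactly what is needed to invoke the theorem cluster-by-cluster. For (b), the standing assumption is that the underlying time series $X_{i,j,t}$ are independent across the index $(i,j)$, and in particular the time series in $C_i$ are independent of the time series in $C_{i'}$ for $i\neq i'$. Since $\hat{M}_i$ is fit using only data from $C_i$, the residual processes $\hat{a}_{i,j,t}$ and the resulting autocorrelations $\hat{r}_{i,j,l}$ for $j=1,\ldots,n_i$ are measurable functions of data disjoint from that used to form $\hat{r}_{i',j',l}$. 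Hence the cluster-wise statistics $Q_{group,i}$ are independent for each finite $T$, and in particular jointly independent in the limit.

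Combining (a) and (b), the sum $Q_{total}$ is asymptotically a sum of $k$ independent chi-squared random variables with degrees of freedom $n_im - p_i - q_i$, which is $\chi^2_\nu$ with $\nu = \sum_{i=1}^k (n_im - p_i - q_i)$ by the standard convolution property of the chi-squared family. A clean way to package this is to apply the continuous mapping theorem to the joint weak convergence $(Q_{group,1}, \ldots, Q_{group,k}) \Rightarrow (Y_1, \ldots, Y_k)$ with the $Y_i$ independent $\chi^2_{n_im-p_i-q_i}$, and then take the continuous map $(y_1,\ldots,y_k)\mapsto \sum y_i$.

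The main obstacle is the independence claim across clusters, because even though the raw series are independent by assumption, one should double-check that the parameter estimation in Theorem~\ref{thm:glb} only couples the residuals within a cluster. Since each $\hat{M}_i$ is fit from $C_i$ alone, this coupling does not cross clusters, so the argument goes through; no further delicate joint analysis of $\hat{r}_{i,j,l}$ and $\hat{r}_{i',j',l}$ across $i\neq i'$ is required beyond noting measurability with respect to disjoint sigma-algebras. The rest is bookkeeping with degrees of freedom.
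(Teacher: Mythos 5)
Your proposal is correct and follows essentially the same route as the paper: decompose $Q_{total}$ into per-cluster Grouped-Ljung-Box statistics, apply Theorem~\ref{thm:glb} to each, note that the $Q_{group,i}$ are independent because the series are independent and each $\hat{M}_i$ is fit only from its own cluster, and sum the resulting independent chi-squared variables. The extra packaging via the continuous mapping theorem is a minor elaboration, not a different argument.
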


Monti showed in \cite{monti1994proposal} that if we have a time series $X_{t}$ and a proposed ARMA($p,q$) model, then we can replace the fitted autocorrelations in the Ljung-Box statistic with the partial autocorrelations, denoted $\hat{\pi} = (\hat{\pi}_1, \ldots, \hat\pi_m)$ and the resulting statistic $$Q(\hat\pi) = T(T+2)\sum_{l=1}^m(T-l)^{-1} \hat\pi_{l}^2$$ asymptotically follows a $\chi^2_{m-p-q}$ distribution. We can use this result to extend the grouped version of the Ljung-Box statistic to use the partial autocorrelations as well.

If we let $\hat{\pi}_{i,k}$ denote the estimated $k$-lag auto-correlation of the residuals of time series $X_{i,t}$, then we can use the statistic $Q_{group}(\hat\pi)$ to perform a statistical test.

\begin{corollary}\label{thm:Qpacf}
For independent time series $X_{1,t}, \ldots, X_{n,t}$ each of length $T$ with the fitted ARMA($p,q$) model $\hat{M}$, the statistic $$Q_{group}(\hat{\pi}) = T(T+2)\sum_{i=1}^n \sum_{k=1}^m (T-k)^{-1}\hat\pi^2_{i,k}$$ asymptotically follows a $\chi^2_{(nm-p-q)}$ distribution if the model is correctly identified.
\end{corollary}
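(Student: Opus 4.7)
The plan is to mirror the structure of the proof of Theorem~\ref{thm:glb} and piggyback on Monti's single-series result. Concretely, Monti's argument gives an asymptotic equivalence, for a single correctly specified ARMA($p,q$) model fit to a series of length $T$, between the vector of residual autocorrelations $\hat{r}_i = (\hat{r}_{i,1},\ldots,\hat{r}_{i,m})$ and the vector of residual partial autocorrelations $\hat{\pi}_i = (\hat{\pi}_{i,1},\ldots,\hat{\pi}_{i,m})$: both are asymptotically multivariate normal with the same limiting covariance matrix, which has rank $m-p-q$ because of the $p+q$ parameters that were estimated. The goal is to transport this equivalence into the grouped setting of Theorem~\ref{thm:glb}, where the proof in Appendix~\ref{sec:proofs} already provides a Box-Pierce-style linearization and establishes the rank reduction for the stacked autocorrelation vector.

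First I would restate the linearization from the proof of Theorem~\ref{thm:glb}: stacking $\hat{r} = (\hat{r}_1^\top,\ldots,\hat{r}_n^\top)^\top$, one shows $\sqrt{T}\,\hat{r} = \sqrt{T}\,r - (I_n \otimes W)\sqrt{T}(\hat{\beta}-\beta) + o_p(1)$, where $\beta = (\phi,\theta)$ is the shared parameter estimated once across all $n$ series and $W$ is the within-series sensitivity matrix from Box-Pierce. Because the $n$ series are independent and a single $\beta$ is estimated, the limiting covariance of $\sqrt{T}\,\hat{r}$ has rank $nm-p-q$, which is what produces the $\chi^2_{(nm-p-q)}$ limit. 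Second, I would invoke Monti to replace, within each series, the sensitivity $W$ (derived for $\hat{r}_i$) by the corresponding sensitivity $W^\pi$ for $\hat{\pi}_i$, noting that Monti shows this substitution leaves the asymptotic distribution of the within-series quadratic form invariant. Third, since the cross-series covariances in the stacked vector $\hat{\pi} = (\hat{\pi}_1^\top,\ldots,\hat{\pi}_n^\top)^\top$ arise only through the shared $\hat{\beta}$ (the intrinsic partial-autocorrelations of independent series are themselves independent), the cross-series block structure of the limiting covariance of $\sqrt{T}\,\hat{\pi}$ is identical in form to that of $\sqrt{T}\,\hat{r}$, with $W$ simply replaced by $W^\pi$. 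The rank argument then carries through unchanged and gives rank $nm-p-q$, so $Q_{group}(\hat{\pi})$ has the same limiting $\chi^2_{(nm-p-q)}$ distribution.

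The main obstacle I anticipate is verifying that the rank deficiency remains exactly $p+q$ rather than growing when one swaps in partial autocorrelations across all $n$ blocks simultaneously. Monti's analysis for a single series shows that $W^\pi$ and $W$ have the same column space asymptotically (both spanned by the score directions of the $p+q$ estimated parameters), and this is the exact property needed: the range of $I_n\otimes W^\pi$ equals the range of $I_n\otimes W$ when restricted to the effective $p+q$-dimensional perturbation coming from $\hat\beta-\beta$. Once this is checked, the factor $T(T+2)/(T-k) \to T$ adjustment and the rest of the argument are identical to the proof of Theorem~\ref{thm:glb}. A clean way to state the result in the writeup is to observe that $\hat{\pi}$ and $\hat{r}$ are linked by an asymptotically invertible linear transformation whose Jacobian acts block-diagonally across series, so the quadratic form in the $\hat{\pi}$'s agrees asymptotically with a quadratic form in the $\hat{r}$'s of the same rank, and Theorem~\ref{thm:glb} then delivers the conclusion.
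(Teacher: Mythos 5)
Your proposal is correct in substance and follows the same overall route as the paper: invoke Monti's single-series result and reuse the stacking/rank machinery from the proof of Theorem~\ref{thm:glb}. The difference is in how much work you assign to Monti. The paper uses the strongest form of his result directly --- $\hat{\pi}_i^{(m)} \overset{d}{=} \hat{r}_i^{(m)} + O_p(1/T)$ \emph{elementwise} --- so that the stacked vector satisfies $\hat{\pi}_{long} \overset{d}{=} \hat{r}_{long} + O_p(1/T)$ and the entire conclusion of Theorem~\ref{thm:glb} (asymptotic normality with covariance $(I_{nm}-C)(I_n\otimes\Sigma^{(m)})$, hence the $\chi^2_{nm-p-q}$ quadratic form) transfers in one line; no separate analysis of sensitivity matrices, cross-series blocks, or rank is needed. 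Your route through the sensitivity matrices $W$ and $W^{\pi}$ and their common column space also works --- projections onto the same column space coincide, so the projection matrix $C$ is unchanged --- but it is more machinery than required, and your closing justification is the one place to be careful: an ``asymptotically invertible linear transformation'' linking $\hat{\pi}$ and $\hat{r}$ would \emph{not} by itself preserve the quadratic form $\hat{r}_{long}'(I_n\otimes\Sigma^{(m)})^{-1}\hat{r}_{long}$ (an invertible change of variables generally changes a quadratic form unless it is orthogonal in the relevant metric). What saves the argument, and what you should state, is that Monti's transformation is the \emph{identity} up to $O_p(1/T)$. One further small slip: your linearization should read $\sqrt{T}\,\hat{r} = \sqrt{T}\,r - (\mathbf{1}_n\otimes W)\sqrt{T}(\hat{\beta}-\beta) + o_p(1)$ rather than $I_n\otimes W$, since the single shared $\hat{\beta}$ perturbs every block identically --- this is exactly what keeps the rank deficiency at $p+q$ rather than $n(p+q)$, as you correctly note in words.
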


\begin{figure}[tb]
    \centering
    \makebox[\textwidth][c]{
    \includegraphics[width=0.5\textwidth]{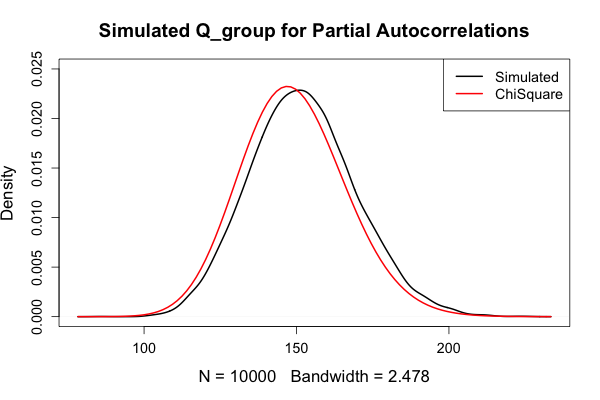}
    \includegraphics[width=0.5\textwidth]{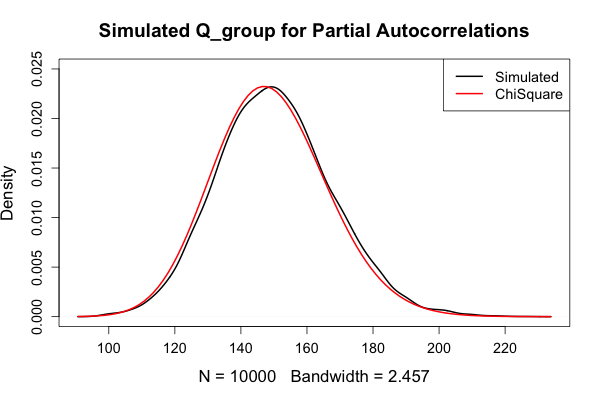}
    }
    \caption{Distribution of 10000 Monte-Carlo estimates of $Q_{group}(\hat{\pi})$ using $m=15$ lags for 10 AR(1) time series of length $T=200$ (left) and for 10 AR(1) time series of length $T=2000$ (right).}
    \label{fig:QgroupPacf}
\end{figure}

Figure~\ref{fig:QgroupPacf} shows Monte-Carlo simulations of the $Q_{group}(\hat\pi)$ statistic plotted with its corresponding asymptotic $\chi^2$ distribution. These plots appear very similar to the ones we constructed for $Q_{group}(\hat{r})$ in that as $T$ increases the approximation gets better, and the hypothesis test based on the asymptotic result appears to be a conservative test when the value of $T$ is smaller.

The extension also works when we have time series of different lengths and when we want to perform a goodness of fit test across multiple clusters of time series with different models yielding a partial correlation analogue to Corollary~\ref{thm:Qtotal}.


\begin{corollary}\label{thm:Qpacftotal}
Under the same conditions as Corollary~\ref{thm:Qtotal}, and denoting the estimated $l$-lag partial autocorrelation of time series $X_{i,j,t}$ as $\hat{\pi}_{i,j,t}$, the statistic $$Q_{total}(\hat{\pi}) = T(T+2)\sum_{i=1}^k \sum_{j=1}^{n_i}\sum_{l = 1}^m (T-l)^{-1}\hat\pi^2_{i,j,l}$$ is asymptotically $\chi^2_{\nu}$, where $\nu = \sum_{i=1}^l (n_im-p_i-q_i)$.
\end{corollary}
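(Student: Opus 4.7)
The plan is to reduce the multi-cluster partial-autocorrelation statistic to a sum of independent single-cluster statistics and invoke Corollary~\ref{thm:Qpacf} cluster by cluster. Specifically, I would decompose
\begin{equation*}
Q_{total}(\hat{\pi}) = \sum_{i=1}^k Q_i(\hat{\pi}), \qquad Q_i(\hat{\pi}) := T(T+2)\sum_{j=1}^{n_i}\sum_{l=1}^m (T-l)^{-1}\hat{\pi}_{i,j,l}^2,
\end{equation*}
so that each $Q_i(\hat{\pi})$ is exactly the grouped partial-autocorrelation statistic for the $i$th cluster evaluated at its fitted ARMA($p_i,q_i$) model $\hat{M}_i$. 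Under the correct-specification assumption, Corollary~\ref{thm:Qpacf} applied to $C_i$ in isolation gives $Q_i(\hat{\pi}) \Rightarrow \chi^2_{n_im-p_i-q_i}$ as $T\to\infty$.

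Next I would argue that the family $\{Q_i(\hat{\pi})\}_{i=1}^k$ is asymptotically jointly independent. The time series across different clusters are driven by independent innovation sequences, so the residuals $\hat{a}_{i,j,t}$ for cluster $i$ are independent of the residuals $\hat{a}_{i',j',t}$ for cluster $i'\neq i$. In the Box–Pierce-style linearization used in the proof of Theorem~\ref{thm:glb} (and adapted in Corollary~\ref{thm:Qpacf}), each $Q_i(\hat{\pi})$ is expressed as a quadratic form in the innovation vectors of cluster $i$ alone, up to $o_p(1)$ terms. Because these vectors are independent across $i$, the multivariate central limit theorem yields a block-diagonal Gaussian limit, and the continuous mapping theorem then gives joint convergence of the vector $(Q_1(\hat{\pi}),\ldots,Q_k(\hat{\pi}))$ to independent chi-square variables with degrees of freedom $n_im-p_i-q_i$.

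Finally I would invoke the standard fact that the sum of independent $\chi^2$ random variables with degrees of freedom $d_1,\ldots,d_k$ is $\chi^2_{d_1+\cdots+d_k}$. Applied to the decomposition above, this yields
\begin{equation*}
Q_{total}(\hat{\pi}) \Rightarrow \chi^2_{\nu}, \qquad \nu = \sum_{i=1}^k (n_im - p_i - q_i),
\end{equation*}
matching the claim. Note that the extension to time series of unequal lengths within a cluster is inherited directly from Corollary~\ref{thm:Qpacf} and does not affect the decomposition.

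The main obstacle is the justification of \emph{joint} asymptotic independence, rather than mere marginal convergence, of the cluster-level statistics. Marginal convergence follows immediately from Corollary~\ref{thm:Qpacf}, but to conclude that the sum converges to a $\chi^2_\nu$ one needs the joint law to factorize in the limit. This requires unpacking the Box–Pierce linearization in Appendix~\ref{sec:proofs} enough to see that each $Q_i(\hat{\pi})$ is, asymptotically, a functional of only the innovations in cluster $i$; once this is observed, block-diagonality of the asymptotic covariance and hence independence of the limiting $\chi^2$ summands follows routinely.
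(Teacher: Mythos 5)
Your proof is correct and follows essentially the same route as the paper: decompose $Q_{total}(\hat{\pi})$ into the cluster-level statistics $Q_i(\hat{\pi})$, apply Corollary~\ref{thm:Qpacf} to each, and sum independent chi-squares. The one simplification available is that the $Q_i(\hat{\pi})$ are \emph{exactly} independent at every finite $T$ --- each is a function only of the data in cluster $C_i$, since $\hat{M}_i$ is fitted from $C_i$ alone and the clusters' time series are mutually independent --- so the joint-asymptotic-independence linearization you flag as the main obstacle is not actually needed.
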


We remark that $Q_{total}(\hat{r})$ and $Q_{total}(\hat{\pi})$ can be used for simultaneous hypothesis testing.

\subsection{ARMA(1,1) Example with an Outlier}

We illustrate the use of $Q_{group}(\hat{r})$ for assessing the fit of a model $M_i$ to a cluster of time series $C_i$ with the ARMA(1,1) example from the previous section. We do this by running the K-ARMA(1,1) algorithm for two clusters with an additional time series that was generated from an ARMA(1,1) process with $\phi = 0.2$ and $\theta = -0.2$. When we plot the fitted parameters to each time series and color them by the fitted cluster, we see that added outlier is far from any other time series (See Figure~\ref{fig:outlier}).

While at a glance, the fitted cluster models are still able to separate the two clusters, the $Q_{group}(\hat{r})$ statistics tell us a different story about the fit of the models. For $20$ lags, the red cluster has $Q_{group}^{red}(\hat{r}) = 462.4082$, whereas $Q_{group}^{blue}(\hat{r}) = 750.0753.$ Under the Grouped-Ljung-Box test, these yield p-values of $p_{red} = 0.8717$ and $p_{blue}<10^{-11}.$
This indicates that the clustering model for the blue cluster is poorly specified, likely as a result of the outlier. 

Looking at the individual Ljung-Box tests for the blue cluster, we identify a time series with a statistic $Q_{outlier} = 239.61314$, far above the rest of the values, and indicating a poor fit for that time series. Identifying the outlier, removing it, and refitting our model on the remaining data in the cluster yields slightly shifted parameters and test statistic $Q_{group}^{blue'} = 496.7404$ with corresponding $p$-value $p_{blue'} = 0.5075057$ indicating a better model fit.

\begin{figure}[htb]
    \centering
    \includegraphics[width=0.6\textwidth]{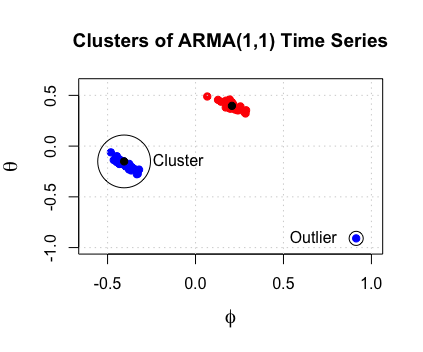}
    \caption{Each point represents the ARMA(1,1) parameters fitted to an individual time series. The points are colored according to the clusters assigned by the K-ARMA algorithm on data generated from 2 independent ARMA(1,1) processes with an outlier.}
    \label{fig:outlier}
\end{figure}

\subsection{Data Drift Application} 

In the machine learning literature there is a general interest in studying and identifying distributional drift (also called concept drift) \cite{lu2018learning}. Distributional drift occurs when the data that a model is trained on is different from subsequently observed data, usually a testing set \cite{conceptdrift}. Since the models developed using our clustering method can be used for classifying new time series we may want to keep an eye out for distributional drift which can occur in a few ways:
\begin{enumerate}
    \item If a new cluster appears.
    \item If new observations in a cluster's parameters change.
    \item If the proportion of time series in the different clusters changes.
\end{enumerate}

As the third point doesn't affect the clustering models, the Ljung-Box statistic is most useful in identifying the first two types of drift. If a new cluster appears, we would be able to identify this much in the same way we identified the outlier in our ARMA(1,1) example (which can be viewed as a new cluster). That is, we would find that the time series belonging to an unobserved cluster would have very large Ljung-Box statistics. If a clusters parameters drift with newly observed data, this would be more difficult to spot and would likely require many observations or significant drift before this becomes readily apparent in the Ljung-Box statistic unless the drift is happening quickly.

\section{Empty Clusters and Model Vanishing}\label{sec:vanishing}

With a K-Models approach there can be instances where we are unable to fit a model to a cluster of data. This can occur when we have more parameters than data in a cluster making us unable to fit a model to a cluster. As a subset of this, we have the special case when no data points are assigned to a cluster of data, which is called the empty cluster problem. In either case we are left unable to fit a model for the cluster assignment step. We call this phenomenon \textit{model vanishing}. This has been observed for $k$-means algorithms, and mitigation techniques have been developed including empty cluster reassignment \cite{gxkmeans, bock2008}.

In our K-ARMA clustering algorithm, we can fit an ARMA model to a single time series, so in practice we only run into model vanishing when no time series are assigned to a cluster in the assignment step. Empirically we have found that model vanishing occurs more frequently when the true number of clusters is less than the number of clusters learned by the K-ARMA algorithm. We also found that it occurs more frequently when we use random partitioning to initialize the clusters than when we use a prototype initialization method. Of all the factors considered in our experiments, the initialization method seemed to have the biggest impact on the model vanishing behavior. See Table~\ref{tab:simulation} for a summary of our simulations and Appendix~\ref{sec:restarts} for a more detailed description of the experiments.

Something we found particularly interesting is that when we selected a number of clusters larger than the true number of underlying generative processes, the partitioning initialization often reduced the number of non-empty clusters to the number of underlying generative processes. For example, when we clustered 4-AR(2) models with $k=10$ using the partitioning initialization and least squares parameter estimation, we got clusters in the range $3-6$, 40\% of which had 4 clusters and only 7 of which had 3 clusters. This seems to suggest that if we have a known number of underlying clusters then we should go with the prototype method of initialization and if we have an unknown number of underlying clusters then we should choose a large starting number of clusters and use the random partitioning method to get something close to the true number of clusters.

\begin{table}[htb]
  \caption{Average number of resulting clusters when 100 K-ARMA clusters are formed for various values of $k$. For Initialization, Pr = Prototype method, and Pa = Random Partitioning. For details see Appendix~\ref{sec:restarts}}
  \label{tab:simulation}
    \small%
	\centering%
  \begin{tabu}{%
	r%
	*{8}{c}%
	}
  \toprule
   Actual Model & \rotatebox{90}{Loss} & \rotatebox{90}{Init.} & \rotatebox{90}{k = 2} &   \rotatebox{90}{k = 3} &   \rotatebox{90}{k = 4} &   \rotatebox{90}{k = 5} &   \rotatebox{90}{k = 7} & \rotatebox{90}{k = 10}\\
  \midrule
  2-AR(2)     & L1 & Pr & 2.00 & 3.00 & 4.00 & 4.99 & 7.00 & 10.00 \\
  2-AR(2)     & L2 & Pr & 2.00 & 2.98 & 4.00 & 5.00 & 6.97& 9.70 \\
  2-AR(2)     & L1 & Pa & 2.00 & 2.38 & 2.61 & 2.71 & 2.74 & 3.33 \\
  2-AR(2)     & L2 & Pa & 2.00 & 2.29 & 2.55 & 2.62 & 2.73 & 3.10 \\
  4-AR(2)     & L1 & Pr & 2.00 & 3.00 & 3.99 & 4.97 & 6.97 & 9.96 \\
  4-AR(2)     & L2 & Pr & 2.00 & 3.00 & 3.97 & 4.98 & 6.97 & 9.95 \\
  4-AR(2)     & L1 & Pa & 2.00 & 3.00 & 3.34 & 3.73 & 4.07 & 4.63 \\
  4-AR(2)     & L2 & Pa & 2.00 & 2.98 & 3.33 & 3.77 & 4.08 & 4.46 \\
  10-AR(2)    & L1 & Pr & 2.00 & 3.00 & 4.00 & 5.00 & 7.00 & 9.98 \\
  10-AR(2)    & L2 & Pr & 2.00 & 3.00 & 4.00 & 5.00 & 6.99 & 9.98 \\
  10-AR(2)    & L1 & Pa & 2.00 & 3.00 & 4.00 & 5.00 & 6.79 & 8.19 \\
  10-AR(2)    & L2 & Pa & 2.00 & 3.00 & 4.00 & 5.00 & 6.69 & 8.04 \\
  2-ARMA(1,1) & L2 & Pr & 2.00 & 2.88 & 3.76 & 4.37 & 5.65 & 7.28 \\
  2-ARMA(1,1) & L2 & Pa & 2.00 & 2.04 & 2.05 & 2.08 & 2.24 & 2.90 \\
  4-ARMA(1,1) & L2 & Pr & 2.00 & 2.88 & 3.89 & 4.76 & 6.64 & 8.93 \\
  4-ARMA(1,1) & L2 & Pa & 1.52 & 2.36 & 3.07 & 3.64 & 4.17 & 4.59 \\
  \bottomrule
  \end{tabu}%
\end{table}

\section{Real Data Experiments}\label{sec:realdata}

We now show how our method can be used to cluster real world data. We do this using three open source datasets\footnote{Downloaded from \href{https://www.csee.umbc.edu/~kalpakis/TS-mining/ts-datasets.html}{https://www.csee.umbc.edu/~kalpakis/TS-mining/ts-datasets.html} on 10 April 2022.} compiled by Kalpakis et al. for their paper on clustering ARIMA time series using cepstral coefficients \cite{cepstral} and subsequently used by Xiong and Yeung in their paper exploring ARMA mixture modeling \cite{armaEM}. In order to compare the fitted clusters to expected clusters we use the same cluster similarity measure as was used by Kalpakis and Xiong which is defined as 
$$Sim(A, B) = \frac{1}{k}\sum_{i=1}^k \max_{1\leq j\leq k} Sim(A_i, B_j), \quad \text{ where }\quad Sim(A_i, B_j) = \frac{2\cdot|A_i \cap B_j|}{|A_i| + |B_j|},$$
where $A = \set{A_1, \ldots, A_k}$ and $B = \set{B_1, \ldots, B_k}$ are two clusterings of a set of data. The similarity measure $Sim(A,B)$ falls in the range $[0,1]$, with a $1$ indicating that the clusterings $A$ and $B$ are identical, and lower scores indicating weaker overlap. The similarity measure $Sim(A,B)$ is not symmetric so we will use the first input in $Sim(A,B)$ to always be the "known" or "expected" clusters and the second input to be the clusters fitted by our method. Because we are only guaranteed a local optimum by the K-ARMA algorithm, we will run it 10 times for each dataset with different random initializations and select a clustering that achieves the best score.

\subsection{Personal Income Data}

The Bureau of Economic Analysis maintains data on the personal income levels of each state in the USA \cite{BEA}. Looking at the  per capita income between the years 1929-1999, it has been suggested that the east coast states, California, and Illinois form a set of "high income growth" states and that the midwest states form a set of "low growth" states \cite{cepstral}. 

For the purposes of comparison, we took the same preprocessing approach as \cite{cepstral}, where the authors first take a rolling mean of the time series with a window size of 2 to smooth the data, followed by taking a log transform of the resulting data, and then using ARIMA(1,1,0) models to cluster the processed data. This left us with Fitted Clusters 1A and 2A in Table~\ref{tab:income}, which had a similarity score of $0.78$ when compared to the expected clusters.

Additionally, because taking the rolling average removes some of the signal in the time series, we clustered the log-transformation of the data without taking the rolling average and using a higher order model instead. We did this using ARIMA(5,1,0) models which yielded the Fitted Clusters 1B and 2B in Table~\ref{tab:income} which had a similarity score of $0.90$ when compared to the expected clusters. This is the same clustering result that was achieved by Xiong and Yeung \cite{armaEM}, and a better clustering result than the clustering of Kalpakis et al. \cite{cepstral}.

\begin{table}[hb]
  \caption{The expected clusters of states for the personal income data as suggested by \cite{cepstral, armaEM} and the fitted clusters given by the K-ARMA algorithm. The states that did not end up in the "expected" cluster are in red.}
  \label{tab:income}
  \small%
	\centering%
\makebox[\textwidth][c]{
  \begin{tabu}{%
	ll
	}
  \toprule
  & States\\
  \midrule
  Expected Cluster 1 & CT, DC, DE, FL, MA, ME, MD, NC, NJ, NY, PA, RI, VA, VT, WV, CA, IL\\
  Expected Cluster 2 & ID, IA, IN, KS, ND, NE, OK, SD\\
  \midrule
  Fitted Cluster 1A & CT, DC, DE, FL, MA, ME, MD, NC, NJ, NY, PA, RI, VA, VT, WV, CA, IL, \textcolor{red}{IN, KS, NE, OK}\\
  Fitted Cluster 2A & ID, IA, ND, SD\\
  \midrule
  Fitted Cluster 1B & CT, DC, DE, FL, MA, ME, MD, NC, NJ, NY, PA, RI, VA, VT, WV, CA, IL, \textcolor{red}{KS, OK}\\
  Fitted Cluster 2B & ID, IA, IN, ND, NE, SD\\
  \bottomrule
  \end{tabu}%
  }
\end{table}





\subsection{Temperature Dataset}

The temperature dataset that we consider was compiled by the National Oceanic and Atmospheric Administration (NOAA) and contains daily temperatures for the year 2000 at locations in Florida, Tennessee, and Cuba. In Kalpakis et al. they group these locations into two groups \cite{armaEM}. The first, contains locations for Northern Florida and Tennessee, and the second contained Southern Florida and Cuba. 

Using the model specifications of Kalpakis et al., we fit 2 clusters using ARIMA(2,1,0) models to the temperature data which resulted in clusters. Unlike Kalpakis et al., we decided not to preprocess the data using a moving window. The resulting two clusters are shown in Figure~\ref{fig:temperatures}, and had a similarity of $0.933$ with the expected clusters. The two locations that did not fall into their expected cluster were in northern Florida which is geographically on the border between the clusters. We also decided to fit 3 clusters to the temperature data, and when we did this all of locations in Cuba ended up in their own cluster.

\begin{figure}[hbt]
    \centering
    \makebox[\textwidth][c]{
    \includegraphics[height = 1.9in]{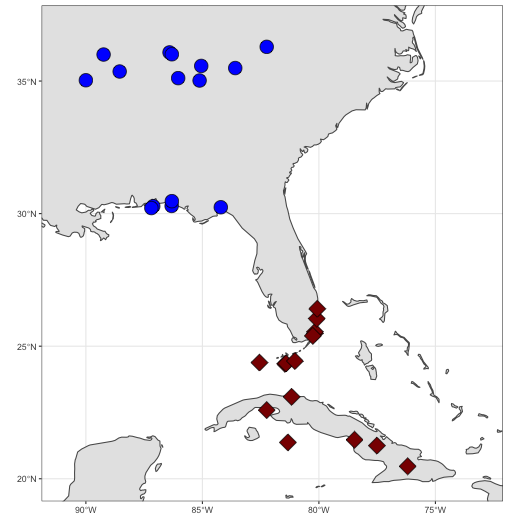}
    \includegraphics[height = 1.9in]{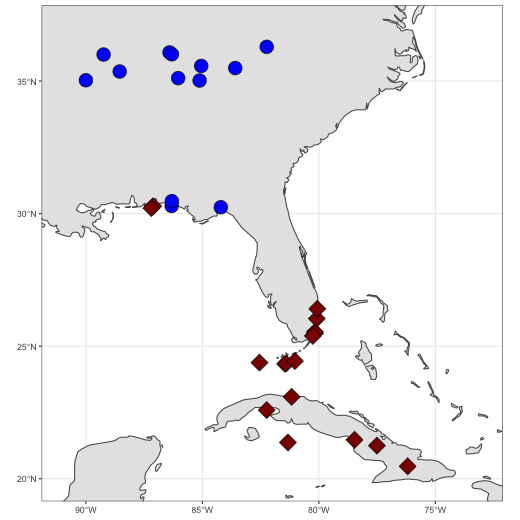}
    \includegraphics[height = 1.9in]{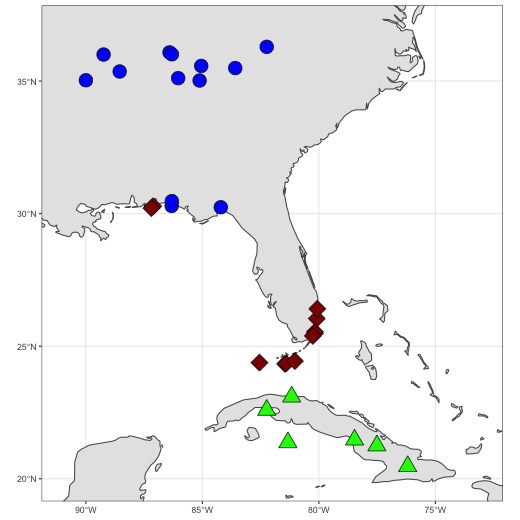}
    }
    \caption{Locations of the temperature collection sites colored by expected cluster (left), by fitted cluster using two clusters (center), and by fitted cluster using three different clusters (right).}
    \label{fig:temperatures}
\end{figure}


\subsection{Population Dataset}

The population of each states in the USA is well documented by the US Census Bureau and forms a time series. Between the years 1900-1999, some states saw their population grow rapidly, while other states saw their populations stagnate. In their paper, Kalpakis et al. make an argument that a subset of 20 states can be reasonably separated into exponential population growth states and states with stabilizing populations. These are listed as Expected Cluster 1 and Expected Cluster 2 in Table~\ref{tab:population}, respectively.

For comparison, we again used the same preprocessing as Kalpakis et al. and  took a similar approach as for the personal income dataset to cluster these states based on their population trend by first taking a rolling mean of the time series with a window size of 2 to smooth the data, followed by taking a log transform of the resulting data, and then using ARIMA(1,1,0) models to cluster the processed data. This left us with Fitted Clusters 1A and 2A in Table~\ref{tab:income}, which had a similarity score of $0.85$ when compared to the expected clusters which showed significantly better performance than either the ARMA mixture modeling approach or the cepstral coefficient clustering approach.

\begin{table}[htb]
  \caption{The expected clusters of states for the population data as suggested by \cite{armaEM, cepstral} and the fitted clusters given by the K-ARMA algorithm. The states that did not fall into their "expected" cluster are in red.}
  \label{tab:population}
  \small%
	\centering%
  \begin{tabu}{%
	ll
	}
  \toprule
  & States\\
  \midrule
  Expected Cluster 1 & CA, CO, FL, GA, MD, NC, SC, TN, TX, VA, WA\\
  Expected Cluster 2 & IL, MA, MI, NJ, NY, OK, PA, ND, SD\\
  \midrule
  Fitted Cluster 1 & CA, CO, FL, GA, MD, NC, SC, TX, VA, WA, \textcolor{red}{MI, NJ}\\ 
  Fitted Cluster 2 & IL, MA, NY, OK, PA, ND, SD, \textcolor{red}{TN}\\
  \bottomrule
  \end{tabu}%
\end{table}

\subsection{Comparison to Other Methods}

Because we use the same similarity score and the same data as that used by \cite{cepstral}, and presumably that used by \cite{armaEM}, we can compare the similarity scores of the clusters that we fit using the K-ARMA method to the clusters fit using a mixture modeling approach and the cepstral coefficient clustering method. This is done in Table~\ref{tab:comparison}. We see that the K-ARMA method performs comparably to the other methods on the personal income dataset and the temperature dataset and quite a bit better than the other two methods on the population dataset.

\begin{table}[htb]
  \caption{The similarity scores of the clusters fitted using the K-ARMA method, an ARMA mixture modeling EM algorithm, and using cepstral coefficients. The highest similarity scores are in bold.}
  \label{tab:comparison}
  \small%
	\centering%
  \begin{tabu}{lccc}
  \toprule
  & K-ARMA & ARMA Mixture Model \cite{armaEM} & Cepstral Coefficients \cite{cepstral}\\
  \midrule
  Personal Income & \textbf{0.90} & \textbf{0.90} & 0.84\\
  Temperature Dataset & 0.93 & \textbf{1.00} & 0.93\\
  Population Dataset & \textbf{0.85} & 0.64 & 0.74 \\
  \bottomrule
  \end{tabu}%
\end{table}

\section{Discussion}\label{sec:discussion}

We presented a new general framework for clustering time series data using a model-based generalization of the K-Means algorithm and showed how it can be used to cluster ARMA models. We use our method on three real world data sets and show that it is competitive with similar clustering methods. Additionally we show the importance of initialization method on empty clusters and model vanishing. There are several avenues we see for extending these results:

\begin{itemize}
    \item All of our experiments and algorithms were conducted using a fixed order for our ARMA models across all of the time series. It would be ideal to incorporate mixed architecture model-based clustering where we allow different clusters to have different order ARMA models. One difficulty with this, however, would be the selection of the model order during the clustering process. One method worth exploring for solving this may be using higher order models and a Lasso-like penalty to zero-out some of the coefficients to select the order.
    \item We only considered time series of the same length in our simulations and examples. It would be good to incorporate time series of different lengths and explore weighting based on the length of the time series. One way this can be done is by weighting the Conditional Sums of Squares criterion by the inverse length of the time series.
    \item We can also extend our results for a variety of different loss functions. For ARMA clustering we used a conditional sum of squares estimate derived from the likelihood function, however this has been shown to be biased for short time series \cite{chung1993small}. Additionally the conditional sums of squares estimate is not robust to outliers. Because of these issues, we may want to consider the unconditional MLE, and more robust fitting procedures such as least absolute deviations for the ARMA fitting process.
    \item Although the Grouped-Ljung-Box test statistic that we developed for diagnosing the models fit to the clusters gives us an indication of the goodness of fit for the clusters, it is not on its own comprehensive. Other measures of goodness of fit and model appropriateness should be explored for analyzing the clusters produced by our algorithm such as the Pe\~{n}a Rodr\'{i}guez Portmanteau test \cite{pena2002}.
    \item It would also be interesting to incorporate the Grouped-Ljung-Box statistic into the clustering process, perhaps by using it to dynamically add and remove clusters for the automatic selection of $k$, the number of clusters.
    \item We saw that the initialization method had a large impact on the resulting clusters. Using a random partition initialization appears to do some form of regularization with respect to the number of clusters but the mechanism that causes this behavior is not well understood and should be explored further.
\end{itemize}

\section*{Acknowledgements}

Derek Hoare and Martin Wells  were partially supported by NIH grant \# R01GM135926-03, and David Matteson gratefully acknowledges financial support from the National Science Foundation Awards 1934985, 1940124, 1940276, and 2114143.


\bibliographystyle{unsrt}
\bibliography{ms}

\begin{thebibliography}{10}

\bibitem{venkatraman2021empirical}
Sara Venkatraman, Sumanta Basu, Andrew~G Clark, Sofie Delbare, Myung~Hee Lee,
  and Martin~T Wells.
\newblock An empirical {B}ayes approach to estimating dynamic models of
  co-regulated gene expression.
\newblock {\em bioRxiv}, 2021.

\bibitem{bulteel2016}
{Kirsten} {Bulteel}, {Francis} {Tuerlinckx}, {Annette} {Brose}, and {Eva}
  {Ceulemans}.
\newblock Clustering vector autoregressive models: Capturing qualitative
  differences in within-person dynamics.
\newblock {\em Frontiers in Psychology}, 7, 10 2016.

\bibitem{ROUT20147}
Minakhi Rout, Babita Majhi, Ritanjali Majhi, and Ganapati Panda.
\newblock Forecasting of currency exchange rates using an adaptive {ARMA} model
  with differential evolution based training.
\newblock {\em Journal of King Saud University - Computer and Information
  Sciences}, 26(1):7--18, 2014.

\bibitem{elamin2020clustering}
Hag ElAmin and Khalid Abd El~Mageed.
\newblock Clustering input signals based identification algorithms for
  two-input single-output models with autoregressive moving average noises.
\newblock {\em Complexity}, 2020.

\bibitem{baragona2001simulation}
Roberto Baragona.
\newblock A simulation study on clustering time series with metaheuristic
  methods.
\newblock {\em Quaderni di Statistica}, 3:1--26, 2001.

\bibitem{takano2020}
{Keisuke} {Takano}, {Mina} {Stefanovic}, {Tabea} {Rosenkranz}, and {Thomas}
  {Ehring}.
\newblock Clustering individuals on limited features of a vector autoregressive
  model.
\newblock {\em Multivariate Behavioral Research}, 56(5):768--786, 05 2020.

\bibitem{hautamaki2008time}
Ville Hautamaki, Pekka Nykanen, and Pasi Franti.
\newblock Time-series clustering by approximate prototypes.
\newblock In {\em 2008 19th International Conference on Pattern Recognition},
  pages 1--4. IEEE, 2008.

\bibitem{decadereview}
Saeed Aghabozorgi, Ali {Seyed Shirkhorshidi}, and Teh {Ying Wah}.
\newblock Time-series clustering – a decade review.
\newblock {\em Information Systems}, 53:16--38, 2015.

\bibitem{maharaj2019time}
Elizabeth~Ann Maharaj, Pierpaolo D'Urso, and Jorge Caiado.
\newblock {\em Time Series Clustering and Classification}.
\newblock Chapman and Hall/CRC, 2019.

\bibitem{rani2012}
{Sangeeta} {Rani} and {Geeta} {Sikka}.
\newblock Recent techniques of clustering of time series data: A survey.
\newblock {\em International Journal of Computer Applications}, 52(15):1--9, 08
  2012.

\bibitem{armaEM}
Yimin Xiong and Dit-Yan Yeung.
\newblock Time series clustering with {ARMA} mixtures.
\newblock {\em Pattern Recognition}, 37(8):1675--1689, 2004.

\bibitem{kmeans}
Hugo Steinhaus.
\newblock Sur la division des corps mat{\'e}riels en parties.
\newblock {\em Bulletin de L'Acad{\'e}mie Polonaise de Sciences},
  4(12):801--804, 1956.

\bibitem{macqueen1967some}
James MacQueen.
\newblock Some methods for classification and analysis of multivariate
  observations.
\newblock {\em Proceedings of the Fifth Berkeley Symposium on Mathematical
  Statistics and Probability}, 1(14):281--297, 1967.

\bibitem{spath1979}
{H.} {Späth}.
\newblock Algorithm 39 clusterwise linear regression.
\newblock {\em Computing}, 22(4):367--373, 12 1979.

\bibitem{bicego2019}
{Manuele} {Bicego}.
\newblock K-random forests: A k-means style algorithm for random forest
  clustering.
\newblock {\em 2019 International Joint Conference on Neural Networks (IJCNN)},
  07 2019.

\bibitem{camastra}
{Francesco} {Camastra} and {Alessandro} {Verri}.
\newblock {\em A Novel Kernel Method for Clustering}, pages 245--250.
\newblock Springer Netherlands, 2005.

\bibitem{maranzana1963}
{F. E.} {Maranzana}.
\newblock On the location of supply points to minimize transportation costs.
\newblock {\em IBM Systems Journal}, 2(2):129--135, 1963.

\bibitem{bock2008}
Hans-Hermann Bock.
\newblock Origins and extensions of the k-means algorithm in cluster analysis.
\newblock {\em Journal Electronique d’Histoire des Probabilités et de la
  Statistique}, December 2008.

\bibitem{dempster1977maximum}
Arthur~P. Dempster, Nan~M. Laird, and Donald~B. Rubin.
\newblock Maximum likelihood from incomplete data via the {EM} algorithm.
\newblock {\em Journal of the Royal Statistical Society: Series B
  (Methodological)}, 39(1):1--22, 1977.

\bibitem{ueda1998deterministic}
Naonori Ueda and Ryohei Nakano.
\newblock Deterministic annealing {EM} algorithm.
\newblock {\em Neural networks}, 11(2):271--282, 1998.

\bibitem{naim2012convergence}
Iftekhar Naim and Daniel Gildea.
\newblock Convergence of the {EM} algorithm for gaussian mixtures with
  unbalanced mixing coefficients.
\newblock In {\em Proceedings of the 29th International Coference on
  International Conference on Machine Learning}, 2012.

\bibitem{bottou1994convergence}
Leon Bottou and Yoshua Bengio.
\newblock Convergence properties of the k-means algorithms.
\newblock {\em Advances in Neural Information Processing Systems}, 7, 1994.

\bibitem{su2007search}
Ting Su and Jennifer~G Dy.
\newblock In search of deterministic methods for initializing k-means and
  {G}aussian mixture clustering.
\newblock {\em Intelligent Data Analysis}, 11(4):319--338, 2007.

\bibitem{ren2022autoregressive}
Benny Ren and Ian Barnett.
\newblock Autoregressive mixture models for clustering time series.
\newblock {\em Journal of Time Series Analysis}, 2022.

\bibitem{coke2010random}
Geoffrey Coke and Min Tsao.
\newblock Random effects mixture models for clustering electrical load series.
\newblock {\em Journal of Time Series Analysis}, 31(6):451--464, 2010.

\bibitem{box2016time}
George E.~P. Box, Gwilym~M. Jenkins, Gregory~C. Reinsel, and Greta~M. Ljung.
\newblock {\em Time Series Analysis : Forecasting and Control}.
\newblock John Wiley \& Sons, Inc, Hoboken, New Jersey, 2016.

\bibitem{jha2015clustering}
Abhay Jha, Shubhankar Ray, Brian Seaman, and Inderjit~S Dhillon.
\newblock Clustering to forecast sparse time-series data.
\newblock In {\em 2015 IEEE 31st International Conference on Data Engineering}.
  IEEE, 2015.

\bibitem{barbaglia2016commodity}
Luca Barbaglia, Ines Wilms, and Christophe Croux.
\newblock Commodity dynamics: A sparse multi-class approach.
\newblock {\em Energy Economics}, 60:62--72, 2016.

\bibitem{ljungbox}
G.~M. Ljung and G.~E.~P. Box.
\newblock {On a Measure of Lack of Fit in Time Series Models}.
\newblock {\em Biometrika}, 65(2):297--303, 08 1978.

\bibitem{boxpierce}
G.~E.~P. Box and David~A. Pierce.
\newblock Distribution of residual autocorrelations in
  autoregressive-integrated moving average time series models.
\newblock {\em Journal of the American Statistical Association},
  65(332):1509--1526, 1970.

\bibitem{monti1994proposal}
Anna~Clara Monti.
\newblock A proposal for a residual autocorrelation test in linear models.
\newblock {\em Biometrika}, 81(4):776--780, 1994.

\bibitem{lu2018learning}
Jie Lu, Anjin Liu, Fan Dong, Feng Gu, Joao Gama, and Guangquan Zhang.
\newblock Learning under concept drift: A review.
\newblock {\em IEEE Transactions on Knowledge and Data Engineering},
  31(12):2346--2363, 2018.

\bibitem{conceptdrift}
Jo\~{a}o Gama, Indrundefined \v{Z}liobaitundefined, Albert Bifet, Mykola
  Pechenizkiy, and Abdelhamid Bouchachia.
\newblock A survey on concept drift adaptation.
\newblock {\em ACM Comput. Surv.}, 46(4), 2014.

\bibitem{gxkmeans}
Chun Hua, Feng Li, Chao Zhang, Jie Yang, and Wei Wu.
\newblock A genetic xk-means algorithm with empty cluster reassignment.
\newblock {\em Symmetry}, 11(6), 2019.

\bibitem{cepstral}
Konstantinos Kalpakis, Dhiral Gada, and Vasundhara Puttagunta.
\newblock Distance measures for effective clustering of arima time-series.
\newblock {\em Proceedings 2001 IEEE International Conference on Data Mining},
  2001.

\bibitem{BEA}
{Bureau of Economic Analysis Personal Income by State}.
\newblock
  \url{https://www.bea.gov/data/income-saving/personal-income-by-state}.
\newblock Accessed: 10 April 2022.

\bibitem{chung1993small}
Ching-Fan Chung and Richard~T Baillie.
\newblock Small sample bias in conditional sum-of-squares estimators of
  fractionally integrated {ARMA} models.
\newblock {\em Empirical Economics}, 18(4):791--806, 1993.

\bibitem{pena2002}
Daniel Peña and Julio Rodríguez.
\newblock A powerful portmanteau test of lack of fit for time series.
\newblock {\em Journal of the American Statistical Association},
  97(458):601--610, 2002.

\bibitem{anderson1942distribution}
Ronald~L Anderson.
\newblock Distribution of the serial correlation coefficient.
\newblock {\em The Annals of Mathematical Statistics}, 13(1):1--13, 1942.

\bibitem{anderson1964asymptotic}
Theodore~W. Anderson and A.M. Walker.
\newblock On the asymptotic distribution of the autocorrelations of a sample
  from a linear stochastic process.
\newblock {\em The Annals of Mathematical Statistics}, 35(3):1296--1303, 1964.

\bibitem{McLeod1983}
A.~I. McLeod and W.~K. Li.
\newblock Diagnostic checking {ARMA} time series models using squared-residual
  autocorrelations.
\newblock {\em Journal of Time Series Analysis}, 4(4):269--273, 1983.

\end{thebibliography}

\appendix

\section{Cluster-wide ARMA(\textit{p,q}) MLE Criterion}\label{sec:mle}
If we assume that the innovations follow a normal distribution with a constant variance $\sigma_\epsilon$, then we can use maximum likelihood estimation for fitting ARMA parameters to a cluster of time series $X_{1,t},\ldots, X_{n,t}$.

For the ARMA($p,q$) model, we have $$X_{j,t} = \epsilon_{j,t} + \sum_{i=1}^p \phi_i X_{t-i} + \sum_{k=1}^q \theta_k \epsilon_{t-k} \qquad \text{where} \quad \epsilon_{j,t} \overset{i.i.d.}{\sim} N(0,\sigma_\epsilon^2).$$ Under this model we can write the likelihood function as 

\begin{equation*}\hspace{-33pt}L(\phi,\theta,\sigma_\epsilon) = \prod_{j=1}^n \prod_{t=1}^T \frac{1}{\sqrt{2\pi}\sigma_\epsilon} \exp\set{\frac{-(X_{j,t} - \sum_{i=1}^p \phi_i X_{t-i} - \sum_{k=1}^q \theta_k \epsilon_{t-k})^2}{2\sigma_\epsilon^2}}
\end{equation*}
which yields the log likelihood 
\begin{align*}
\ell(\phi,\theta,\sigma_\epsilon) =& -\frac{nT}{2}\log(2\pi) - \frac{nT}{2}\log(\sigma_\epsilon^2) \\
&\quad - \frac{1}{2\sigma_\epsilon^2}\sum_{j=1}^n \sum_{t=1}^T (X_{j,t} - \sum_{i=1}^p \phi_i X_{t-i} - \sum_{k=1}^q \theta_k \epsilon_{t-k})^2.
\end{align*}

Maximizing the log-likelihood function then amounts to finding the parameters $\phi$ and $\theta$ that minimize the sums of squares criterion $$SSE = \sum_{j=1}^n \sum_{t=1}^T (X_{j,t} - \sum_{i=1}^p \phi_i X_{t-i} - \sum_{k=1}^q \theta_k \epsilon_{t-k})^2.$$ We can turn this into a conditional sums of squares problem by conditioning on $X_1, \ldots, X_p$, and assuming that the first $q$ innovations are $0$. This leaves us with the conditional sums of squares criterion 
\begin{equation*}
cSSE = \sum_{j=1}^n\sum_{t = max(p,q)+1}^T (X_{j,t} - \sum_{i=1}^p\phi_i X_{j,t-i} - \sum_{i=1}^q \theta_i \epsilon_{j,t-i})^2 
\end{equation*}
which is what we minimize in Algorithm~\ref{alg:karma}.

\section{Additional Proofs}\label{sec:proofs}

\textbf{Theorem~\ref{thm:glb}}. For $n$ independent time series $X_{1,t}, \ldots, X_{n,t}$ each of length $T$ with the fitted ARMA($p,q$) model $\hat{M}$, the Grouped-Ljung-Box statistic $Q_{group}(\hat{r})$ computed with $m$ lags asymptotically follows a $\chi^2_{(nm-p-q)}$ distribution if the model is correctly identified.

\begin{proof}
The proof generally follows the proof of the asymptotic distribution of the Ljung-Box Statistic in \cite{ljungbox}. We start by considering the true white noise process and the true correlations. If the $a_{i,t}$ are iid Normal with mean $0$, then according to \cite{anderson1942distribution, anderson1964asymptotic}, the limiting distribution of the autocorrelations in the white noise process for $X_{i,t}$, which we denote $r_{i}^{(m)} = (r_{i,1}, \ldots, r_{i,m})',$ is a normal distribution with mean $0$ and diagonal covariance matrix where $$Var(r_{i,l}) = \frac{T-l}{T(T+2)}.$$ Denote this covariance matrix of $r_{i}^{(m)}$ as $\Sigma^{(m)}$. Our assumptions also imply that the vectors $r_i^{(m)}$ and $r_j^{(m)}$ are independent for $i\neq j$. Letting $r_{long} = (r_1^{(m)'}, r_2^{(m)'}, \ldots, r_n^{(m)'})'$ be the stacked vector of the autocorrelations, we have $r_{long}\sim N(\textbf{0}_{nm}, I_n\otimes \Sigma^{(m)}).$

Box and Pierce showed in \cite{boxpierce} that there exists a matrix $A$ such that $\hat{r}_i^{(m)} \overset{d}{=} (I_m - A)r_i^{(m)} + O_p(1/T)$, where $A$ is an idempotent matrix of rank $p+q$. In particular, \cite{McLeod1983} shows in detail that the matrix $A$ takes the form $A = X(X'X)^{-1}X'$, where $X$ is the $m\times (p+q)$ matrix

\begin{equation}
    X = \begin{bmatrix}
    1 & 0 & \cdots & 0 &\rvline& 1 & 0 & \cdots & 0\\
    \phi'_1 & 1 & \cdots & 0 &\rvline & \theta'_1 & 1 & \cdots & 0\\
    \phi'_2 & \phi'_1 & \cdots & 0 &\rvline & \theta'_2 & \theta'_1 & \cdots & 0\\
    \vdots & \vdots & \ddots & \vdots & \rvline & \vdots & \vdots & \ddots & \vdots\\
    \phi'_{m-1} & \phi'_{m-2} & \cdots & \phi'_{m-p} & \rvline & \theta'_{m-1} & \theta'_{m-2} & \cdots & \theta'_{m-q}
    \end{bmatrix},
\end{equation}
with the $\phi'$ and $\theta'$ are constructed from the ARMA(p,q) polynomials $$\phi(B)=1-\phi_1B-\cdots-\phi_pB^p\quad \text{and}\quad \theta(B)=1-\theta_1B-\cdots -\theta_qB^q$$ so that $$1/\phi(B) = \sum_{i=1}^n \phi'_i B^i \quad\text{and} \quad1/\theta(B) = \sum_{i=1}^n \theta'_i B^{i}.$$ 

We will construct an idempotent matrix $C$ of rank $p+q$ so that $$\hat{r}_{long} \overset{d}{=} (I_{nm} - C)r_{long} + O_p(1/T)$$ which will yield the desired asymptotic result. From Box and Pierce \cite{boxpierce} we have that $$\hat{r}_{i}^{(m)} \overset{d}{=} r_{i}^{(m)} + X(\beta - \hat{\beta}) + O_p(1/T) \qquad 1\leq i \leq n,$$ where $\beta = (\phi_1, \ldots, \phi_p, \theta_1, \ldots, \theta_q)'$, and $\hat\beta = (\hat\phi_1, \ldots, \hat\phi_p, \hat\theta_1, \ldots, \hat\theta_q)'$ is our set of fitted parameters. Consequently, we have that 
\begin{equation}\label{eq:rlong}
\hat{r}_{long} \overset{d}{=} r_{long} + \mathbf{1}_n \otimes X(\beta - \hat{\beta}) + O_p(1/T).
\end{equation}
Consider the matrix $D = \mathbf{1}_n\otimes X$, and let $C = D(D'D)^{-1}D'$. Computing, we have that $C = \frac{1}{n}(\mathbf{1}_n\mathbf{1}_n^T)\otimes X(X'X)^{-1}X'.$

We now left multiply both sides of Equation (\ref{eq:rlong}) by $C$, performing a similar computation to Box and Pierce \cite{boxpierce}. When we do this, we get that $$C\hat{r}_{long} \overset{d}{=} 0 + O_p(1/T).$$ Additionally we have that 
\begin{align*}
    C(r_{long} + \one{n}\otimes X(\beta - \hat\beta) & = Cr_{long} + C\one{n}\otimes X(\beta - \hat\beta)\\
    &= Cr_{long} + \bigp{\frac{1}{n}(\mathbf{1}_n\mathbf{1}_n^T)\otimes X(X'X)^{-1}X'}\bigp{\one{n}\otimes X(\beta - \hat\beta)}\\
    &= Cr_{long} + \bigp{\frac{1}{n}(\mathbf{1}_n\mathbf{1}_n^T\one{n})}\otimes \bigp{X(X'X)^{-1}X'X(\beta - \hat\beta)}\\
    &= Cr_{long} + \one{n}\otimes X(\beta - \hat\beta)\\
    &\overset{d}{=} Cr_{long} + \hat{r}_{long} - r_{long} + O_p(1/T).
\end{align*}

Together, we get 
\begin{align*}
    0 &\overset{d}{=} Cr_{long} + \hat{r}_{long} - r_{long} + O_p(1/T)\\*
    \implies \qquad \hat{r}_{long} &\overset{d}{=} (I_{nm} - C)r_{long} + O_p(1/T).
\end{align*}
Consequently, $\hat{r}_{long}$ is asymptotically $N(\mathbf{0}_{nm},(I_{nm}-C)(I_n\otimes \Sigma^{(m)}))$. Call the covariance matrix of the asymptotic distribution $\Gamma = (I_{nm}-C)(I_n\otimes \Sigma^{(m)}).$ We now observe that $$Q_{group}(\hat{r}) = T(T+2)\sum_{i=1}^n \sum_{l=1}^m (T-l)^{-1}\hat{r}_{i,l}^2  =  \hat{r}_{long}' (I_n\otimes \Sigma^{(m)})^{-1} \hat{r}_{long}$$
is a quadratic form. Denoting $M = (I_n\otimes \Sigma^{(m)})^{-1},$ a little algebra yields that $\Gamma M \Gamma M \Gamma = \Gamma M \Gamma,$ a result of which is that $\hat{r}_{long}'M\hat{r}_{long} \sim \chi^2(\text{df} = \tr(\Gamma M))$. Some more algebra yields that $\tr(\Gamma M) = \rank(I_{nm} - C) = nm-p-q$. Thus, $Q_{group}(\hat{r})$ is asymptotically $\chi^2_{nm-p-q}.$
\end{proof}



\textbf{Corollary \ref{thm:Qtotal}.} The statistic $Q_{total}(\hat{r})$ asymptotically follows a $\chi^2_{\nu}$ distribution where the degrees of freedom $\nu = \sum_{i=1}^k (n_im-p_i-q_i)$ if the models $\hat{M}_i$ are correctly identified.

\begin{proof}
Let $Q_i = T(T+2)\sum_{j=1}^{n_i}\sum_{l = 1}^m (T-l)^{-1}\hat{r}_{i,j,l}^2$ be the Grouped-Ljung-Box statistic for the time series in cluster $C_i$ under the ARMA($p_i, q_i$) model $\hat{M}_i$. Then by Theorem~\ref{thm:glb} we have that $Q_i$ asymptotically follows a $\chi^2_{(n_im - p_i - q_i)}$ distribution. Furthermore, we have that the $Q_i$ are all mutually independent since the time series $X_{i,j,t}$ are independent and the fitted model $\hat{M}_i$ only depends on the time series in $C_i$. So $$Q_{total}(\hat{r}) = T(T+2)\sum_{i=1}^k\sum_{j=1}^{n_i}\sum_{l = 1}^m (T-l)^{-1}\hat{r}_{i,j,l}^2 = \sum_{i=1}^k Q_i$$ is asymptotically a sum of independent $\chi^2_{(n_im - p_i - q_i)}$ distributions and is hence asymptotically $\chi^2_{\nu}$ where $\nu = \sum_{i=1}^k (n_im - p_i - q_i)$.
\end{proof}

\textbf{Corollary \ref{thm:Qpacf}.} For independent time series $X_{1,t}, \ldots, X_{n,t}$ each of length $T$ with the fitted ARMA($p,q$) model $\hat{M}$, the statistic $$Q_{group}(\hat{\pi}) = T(T+2)\sum_{i=1}^n \sum_{k=1}^m (T-k)^{-1}\hat\pi^2_{i,k}$$ computed using $m$ lags asymptotically follows a $\chi^2_{(nm-p-q)}$ distribution if the model is correctly identified.

\begin{proof}
The proof of this result is very similar to the proof of Theorem~\ref{thm:glb}.
Let $r_i$, $\hat{r}_i$, $r_{long}$, $\hat{r}_{long}, X, A, C, \Sigma^{(m)}$ be as in the proof of Theorem~\ref{thm:glb}.
Let $\hat\pi_i = (\hat\pi_{i,1}, \ldots, \hat\pi_{i,m})'$ denote the fitted partial autocorrelations in the white noise process for the time series $X_{i,t}$. In \cite{monti1994proposal}, Monti shows that $\hat{\pi}^{(m)}_i \overset{d}{=} \hat{r}^{(m)}_i + O_p(1/T),$ and, as a consequence of this, $\hat{\pi}_i = (I_m - A)r_i + O_p(1/T)$. This result of Monti together with our construction in the proof of Theorem~\ref{thm:glb} implies that $\hat\pi_{long} = (\hat\pi_1', \ldots, \hat\pi_n')'$ satisfies 
\begin{align*}
    \hat\pi_{long} &\overset{d}{=} \hat{r}_{long} + O_p(1/T) \\
    &\overset{d}{=}(I_{nm} - C)r_{long} + O_p(1/T)\\ &\overset{d}{=} N\left(\mathbf{0}_{nm}, (I_{nm} - C)(I_n\otimes \Sigma^{(m)})\right) + O_p(1/T).
\end{align*}

That is, $\hat\pi_{long}$ is asymptotically $N\left(\mathbf{0}_{nm}, (I_{nm} - C)(I_n\otimes \Sigma^{(m)})\right)$. Observing that $Q_{group}(\hat{\pi}) = \hat{\pi}_{long}'(I_n\otimes\Sigma^{(m)})^{-1}\hat{\pi}_{long},$ it follows that $Q_{group}(\hat\pi)$ is asymptotically $\chi^2_{nm-p-q}$.
\end{proof}



\textbf{Corollary \ref{thm:Qpacftotal}.} Under the same conditions as Corollary~\ref{thm:Qtotal}, and denoting the estimated $l$-lag partial autocorrelation of time series $X_{i,j,t}$ as $\hat{\pi}_{i,j,t}$, the statistic $$Q_{total}(\hat{\pi}) = T(T+2)\sum_{i=1}^k \sum_{j=1}^{n_i}\sum_{l = 1}^m (T-l)^{-1}\hat\pi^2_{i,j,l}$$ is asymptotically $\chi^2_{\nu}$, where $\nu = \sum_{i=1}^l (n_im-p_i-q_i)$

\begin{proof}
This follows from Corollary~\ref{thm:Qpacf} and the argument we used to prove Corollary~\ref{thm:Qtotal}.
\end{proof}

\section{K-ARMA Simulation Experiments}\label{sec:restarts}
In this appendix we discuss in depth some experiments we conducted to better understand model-vanishing, cluster identification, and the need for random restarts. All of the time series used in our clustering experiments were generated using the \texttt{arima.sim} function in \texttt{R}.

\subsection{AR(\textit{p}) Models}

We start by considering clusters of AR($p$) models. In this section we describe how the AR(2) models were generated for the experiments conducted in Table~\ref{tab:simulation}.

\subsubsection{2-AR(2) Models}

We first consider clusters of $2-AR(2)$ models generated from the parameters $(\phi_1 = 0.7, \phi_2 = .25) $ and $(\phi_1 = -0.3, \phi_2 = 0.2 )$. We generated 25 models from each of these parameter sets to get 2 clusters of time series. Each time series was a fixed length of 100 time units.

\subsubsection{4-AR(2) Models}

We next consider $4-AR(2)$ models generated from the following parameters: $(\phi_1 = 0.7,\phi_2 = .2)$, $(\phi_1 = -0.3,\phi_2 = .2)$, $(\phi_1 = 0.4, \phi_2 = -.2)$, and $(\phi_1 = -0.2, \phi_2 = -.5)$.

\subsubsection{10-AR(2) Models}\label{sec:10AR}

Ten clusters of AR(2) models were constructed from the parameters in Table~\ref{tab:10AR2}. We generated 25 time series for each cluster with 1000 time points each.

\begin{table}[H]
  \caption{Parameters for 10 AR(2) models.}
  \label{tab:10AR2}
  \scriptsize%
	\centering%
  \begin{tabular}{%
	m{10pt}*{10}{m{23pt}<{\centering}}%
	}
  \hline
   \rotatebox{90}{Parameter} & \rotatebox{90}{Cluster 1} & \rotatebox{90}{Cluster 2}& \rotatebox{90}{Cluster 3}& \rotatebox{90}{Cluster 4}& \rotatebox{90}{Cluster 5}& \rotatebox{90}{Cluster 6}& \rotatebox{90}{Cluster 7}& \rotatebox{90}{Cluster 8}& \rotatebox{90}{Cluster 9}& \rotatebox{90}{Cluster 10}\\\hline
  $\phi_1$ &$-0.097$ & $-0.215$ & 0.419 & $-0.237$ & 0.273 & 0.403 & 0.281 & 0.144 & 0.105 & 0.861 \\
  $\phi_2$ & $-0.945$ & $-0.463$ & 0.206 & 0.135 & 0.640 & $-0.497$ & 0.500 & 0.824 & $-0.550$ & $-0.520$ \\\hline
  \end{tabular}%
\end{table}

\begin{figure}[H]
    \centering
    \includegraphics[width = 0.5\textwidth]{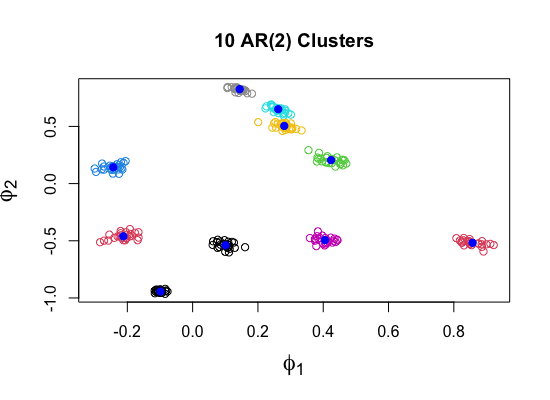}
    \caption{10 correctly identified clusters of AR(2) time series. The fitted parameters are in blue.}
    \label{fig:10AR2}
\end{figure}

\begin{figure}[H]
    \centering
    \includegraphics[width = 0.45\textwidth]{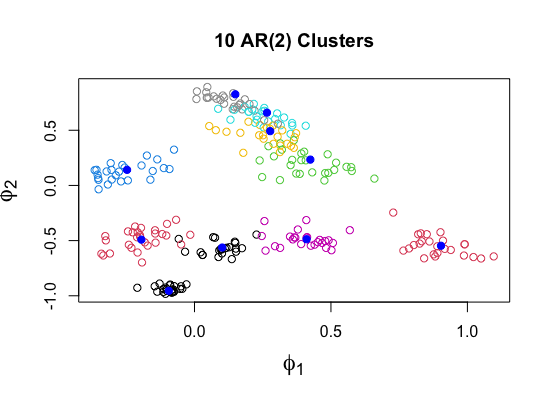}
    \includegraphics[width = 0.45
    \textwidth]{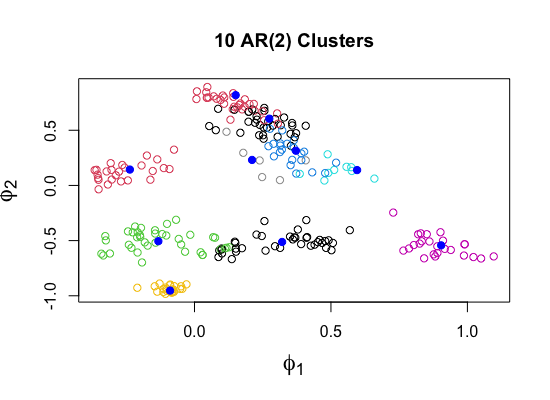}
    \caption{Two clusterings of the time series with only 100 time points for each series. The left clusters have $Sim(A,B) = 0.9$, whereas on the right the algorithm converged to a local solution and $Sim(A,B) = 0.82$}
    \label{fig:10AR100}
\end{figure}


If we look at fewer time points for each of these time series, the separation is less clear. This makes it more difficult to cluster the time series which can be seen in Figures~\ref{fig:10AR100}. Figure~\ref{fig:10AR100} also shows the importance of cluster initialization.









\subsection{ARMA(1,1) Models}

We also considered ARMA(1,1) models in this paper which we also generated using \texttt{arima.sim} in \texttt{R}.

\subsubsection{2-ARMA(1,1)}
For the clusters generated using 2 clusters of ARMA(1,1) models we used the parameters $(\phi_1 = -0.4, \theta_1 = 0.2)$ and $(\phi_1 = -0.2, \theta_1 = 0.4)$. We generated 25 time series from each of these parameter sets where each time series had 1000 time points.

\subsubsection{4-ARMA(1,1)}
For the clusters generated using 4 clusters of ARMA(1,1) models we used the parameters $(\phi_1 = -0.4, \theta_1 = 0.2)$, $(\phi_1 = -0.2, \theta_1 = 0.4)$, $(\phi_1 = 0.2, \theta_1 = 0.4)$, and $(\phi_1 = -0.2, \theta_1 = -0.4)$. We generated 25 time series from each of these parameter sets where each time series had 1000 time points.


\end{document}